\documentclass[12pt,a4paper]{article}

\usepackage[margin=1in]{geometry}
\usepackage[doublespacing]{setspace}

\usepackage{amsmath}
\usepackage{amssymb}
\usepackage{amsthm}
\usepackage{mathtools}
\usepackage{mathrsfs}
\usepackage{bm}

\usepackage{graphicx}
\usepackage{booktabs}
\usepackage[ruled,vlined]{algorithm2e}
\usepackage{multirow}
\usepackage{subcaption}
\usepackage{rotating}

\usepackage{enumitem}

\usepackage{xcolor}
\usepackage{url}
\usepackage[hidelinks]{hyperref}

\usepackage[authoryear,round]{natbib}

\usepackage{appendix}

\graphicspath{{Fig/}}

\theoremstyle{plain}
\newtheorem{theorem}{Theorem}
\newtheorem{proposition}[theorem]{Proposition}

\newtheorem{corollary}{Corollary}

\theoremstyle{definition}
\newtheorem{definition}{Definition}
\newtheorem{assumption}{Assumption}

\theoremstyle{remark}
\newtheorem{remark}{Remark}

\newcommand{\Pall}{\mathcal{P}_{\mathrm{all}}}
\newcommand{\Rterm}{\mathcal{R}_T}
\newcommand{\Rstar}{\mathcal{R}_T^{\star}}

\title{\textbf{BRACE: Differential Privacy for Dense Associative Memory with LSR Energy}}

\author{
Chang Qu\\[0.5em]
Department of Mathematics and Statistics, University of Ottawa\\
\and
Zhaoyang Shi\thanks{Corresponding author:
\href{mailto:zyshi10m@outlook.com}{zyshi10m@outlook.com}}\\[0.5em]
Center for Applied Mathematics, Fudan University
}
\date{}

\begin{document}

\maketitle

%\editor{Associate Editor: Name}

%\abstract{
%\textbf{Motivation:} .\\
%\textbf{Results:} .\\
%\textbf{Availability:} .\\
%\textbf{Contact:} \href{name@email.com}{name@email.com}\\
%\textbf{Supplementary information:} Supplementary data are available at \textit{Journal Name}
%online.}

\begin{abstract}
Dense associative memory (DAM) provides an energy-based framework for memory retrieval with close connections to attention mechanisms in modern artificial intelligence. Despite growing interest in differential privacy for AI, the privacy of DAM retrieval dynamics remains relatively unexplored. In this paper, we develop a differential privacy framework for log-sum-ReLU (LSR) dense associative memory, whose finite-support retrieval dynamics pose distinctive challenges for privacy-preserving computation. We propose the Boundary-Responsive Adaptive Correction Evolution (BRACE) algorithm, a differentially private retrieval mechanism for LSR-DAM that adaptively corrects boundary-sensitive perturbations to control their cumulative effect over the retrieval trajectory. In theory, we prove that our method is minimax optimal by deriving dimension-independent terminal and full-trajectory retrieval error rates, with optimal dependence on the inverse temperature and, in the growing-horizon regime, the retrieval horizon. We further establish central limit theorems that enable uncertainty quantification for private retrieval by characterizing its asymptotic distribution and the additional variability introduced by privacy. Numerical experiments compare our proposed method with baseline differential privacy approaches and evaluate its retrieval accuracy. Together, our results provide a theoretical foundation for optimal privacy-preserving retrieval and uncertainty quantification in energy-based associative memory systems.
\end{abstract}
\noindent\textbf{Keywords:}
Differential privacy, Associative Memory, Log-sum-ReLU energy 

% \boxedtext{
% \begin{itemize}
% \item Key boxed text here.
% \item Key boxed text here.
% \item Key boxed text here.
% \end{itemize}}

\section{Introduction}
\label{sec:introduction}
% Introduction and related-work discussion to be written by the author.
Differential privacy (DP) has become a fundamental framework for enabling
privacy-preserving machine learning
\citep{dwork2006calibrating,dwork2014algorithmic,abadi2016deep,ponomareva2023dp},
particularly in modern artificial intelligence systems that increasingly rely
on sensitive and personalized data
\citep{yu2021differentially,lin2024differentially,sander2024differentially}.
With the emergence of memory-augmented AI systems, such as MemGPT
\citep{packer2023memgpt}, LongMem \citep{wang2023augmenting}, and other
memory-enhanced LLM architectures \citep{schuurmans2023memory}, protecting
stored information and retrieval processes becomes increasingly important.
Recently, dense associative memory (DAM) has emerged as an energy-based
framework for memory storage and retrieval, with close connections to modern
AI systems, including attention mechanisms and memory-augmented architectures
\citep{ramsauer2021hopfield,krotov2021large,hoover2023energy}.
However, despite extensive progress on DP for machine learning models, the
privacy of DAM retrieval dynamics remains largely unexplored. This challenge
is particularly relevant in applications involving sensitive memories, such as
medical record retrieval \citep{wang2025activating}, personalized LLM memory
modules \citep{packer2023memgpt}, and user-specific AI assistants
\citep{yuan2023personalized}. Therefore, developing principled DP mechanisms
for DAM retrieval is essential for trustworthy memory-based AI systems.

Among existing DAM formulations, the log-sum-exponential (LSE) energy has been
the dominant choice \citep{ramsauer2021hopfield,krotov2016dense} due to its
close connection with Transformer attention. LSE-DAM performs retrieval through
smooth similarity-based weighting of memories, but faces an inherent trade-off
between memorization and generalization. Recently, the log-sum-ReLU (LSR)
energy was introduced as a compact-support alternative to LSE-DAM to improve
this trade-off \citep{hoover2026dense,santos2026sparse}. By
replacing exponential interactions with compactly supported interactions,
LSR-DAM enables localized retrieval and distinct memory properties. However,
this finite-support mechanism introduces a fundamental challenge for DP.
Unlike LSE-DAM, where retrieval weights vary smoothly, LSR-DAM relies on a hard
retrieval boundary that determines the active memory set. Privacy perturbations
can therefore trigger boundary switching, causing discontinuous changes in the
retrieval operator and errors that accumulate over iterative retrieval. As a
result, existing DP approaches based on smooth perturbation analysis cannot be
directly applied. This motivates the following fundamental question:

\textit{How can we design a differentially private retrieval mechanism for
LSR-DAM that adaptively corrects boundary-induced instability while achieving
accurate and theoretically optimal retrieval?}

In this paper, we answer this question by proposing \textbf{BRACE}
(\textbf{B}oundary-\textbf{R}esponsive \textbf{A}daptive
\textbf{C}orrection \textbf{E}volution), the first DP framework specifically
designed for LSR-DAM. BRACE separates privacy perturbations from retrieval
instability by identifying memory points whose membership changes under noisy
retrieval and applying an adaptive correction before injecting calibrated DP
noise. This boundary-aware design enables stable private retrieval while
preserving the compact-support dynamics of LSR-DAM.

Our contributions are summarized as follows:
\begin{enumerate}
    \item We formulate the first DP framework for LSR-DAM and propose BRACE,
    which resolves the boundary instability uniquely induced by compact-support
    associative memory retrieval.
    \item We establish rigorous theoretical guarantees for BRACE, including
    differential privacy and minimax-optimal retrieval error bounds, and
    demonstrate its superiority over commonly used DP baselines through
    experiments.
    \item We develop a CLT-based uncertainty quantification framework for
    private LSR-DAM retrieval, providing principled characterization of
    privacy-induced uncertainty in energy-based AI systems.
\end{enumerate}

By establishing differential privacy for LSR-DAM, our work not only advances
privacy-preserving associative memory systems, but also enables secure
memory-augmented AI architectures for applications involving sensitive and
personalized information.

\section{Preliminaries and Problem Setup}
Let $n,d\in\mathbb N$ and write $[n]=\{1,\ldots,n\}$.
For a sequence $(Y_s)$, write
\[
Y_{<t}=(Y_0,\ldots,Y_{t-1}),
\]
and use $y_{<t}$ for a fixed realization of this history.
We write $\mathbf 1\{\cdot\}$ for the indicator function and $I_k$ for
the $k\times k$ identity matrix.
Let $D=(X_1,\ldots,X_n)\in(\mathbb R^d)^n$ be the private memory bank.
The initial query $q\in\mathbb R^d$, retrieval radius $r>0$, and horizon
$T\ge1$ are public. Throughout, $\|\cdot\|$ is the Euclidean norm.

Dense associative memory (DAM) retrieves patterns by iterating an
energy-based update. For inverse temperature $\beta>0$, the
log-sum-exponential (LSE) energy and update are
\[
\begin{aligned}
E_{\mathrm{LSE}}(z;D)
&=
-\frac{1}{\beta}
\log\left(
\sum_{i=1}^n e^{\beta z^\top X_i}
\right),\\
z_{t+1}
&=
\frac{\sum_{i=1}^n e^{\beta z_t^\top X_i}X_i}
{\sum_{i=1}^n e^{\beta z_t^\top X_i}}.
\end{aligned}
\]
The log-sum-ReLU (LSR) energy replaces exponential interactions with
compactly supported interactions:
\[
\begin{aligned}
E_{\mathrm{LSR}}(z;D) &= -\log\!\left(\sum_{i=1}^n (r^2-\|X_i-z\|^2)_+\right),\\
(a)_+ &= \max\{a,0\}, \qquad r := \sqrt{\frac{2}{\beta}}.
\end{aligned}
\]
{Unlike the global LSE update, LSR retrieval averages only memories within radius \(r\). This compact support yields bounded movement and finite fixed-center sensitivity without clipping or globally bounded memory vectors, and the sensitivity decreases with the local active count. The trade-off is boundary discontinuity, which motivates BRACE.}
\begin{definition}[Central differential privacy \citep{dwork2014algorithmic}]
\label{def:central-dp}
A mechanism $M(D)=(Y_1,\ldots,Y_T)\in(\mathbb R^d)^T$ is centrally
$(\varepsilon,\delta)$-differentially private (DP) if
\[
\begin{aligned}
&\sup_{\substack{D,D':\, d_H(D,D')=1}}
\sup_{A\subseteq(\mathbb R^d)^T}
\Bigl\{
\mathbb P\{M(D)\in A\}\\
&\qquad\qquad
-e^\varepsilon\mathbb P\{M(D')\in A\}
\Bigr\}
\le \delta.
\end{aligned}
\]
where $d_H$ counts differing records and the second supremum is over
measurable sets $A$. Write $D\sim D'$ when $d_H(D,D')\le1$.
The mechanism accesses the full bank and releases only $M(D)$.
\end{definition}

\begin{definition}[Local and smooth sensitivity \citep{nissim2007smooth}]
\label{def:smooth-sensitivity}
For a query $f:(\mathbb R^d)^n\to\mathbb R^d$, define
\[
    \operatorname{LS}_f(D)=\sup_{D'\sim D}\|f(D)-f(D')\|.
\]
Given a function $L_f$ satisfying $L_f(B)\ge\operatorname{LS}_f(B)$ for every bank $B$, and a parameter $\gamma>0$, define its smooth envelope by
\(
    \mathsf S_\gamma[L_f](D)
    =\sup_{B\in(\mathbb R^d)^n}e^{-\gamma d_H(D,B)}L_f(B),
\)
It satisfies
\[
\begin{aligned}
\operatorname{LS}_f(D) &\le \mathsf S_\gamma[L_f](D),\\
\mathsf S_\gamma[L_f](D) &\le e^\gamma \mathsf S_\gamma[L_f](D'),
\qquad D\sim D'.
\end{aligned}
\]
For $L_f=\operatorname{LS}_f$, this is the $\gamma$-smooth sensitivity.
\end{definition}

\section{Challenges in LSR-DP: Boundary Instability}
\subsection{Retrieval Setup}
For any memory bank
$B=(B_1,\ldots,B_n)\in(\mathbb R^d)^n$, define
\[
I(B,z)=\{i\in[n]:\|B_i-z\|\le r\},
\]
and
\[
\Phi_B(z)=
\begin{cases}
\displaystyle
\frac{1}{|I(B,z)|}\sum_{i\in I(B,z)}B_i,
& I(B,z)\neq\varnothing,\\[2mm]
z,
& I(B,z)=\varnothing.
\end{cases}
\]
For the observed private bank $D=(X_1,\ldots,X_n)$, we write
$I(D,z)$ and $\Phi_D(z)$. The clean trajectory is $z_0=q$ and $z_t=\Phi_D(z_{t-1})$ for
$t=1,\ldots,T$. The update satisfies
\[
    \|\Phi_D(z)-z\|\le r,
    \qquad \|z_t-z_{t-1}\|\le r,
    \qquad \|z_t-q\|\le tr.
\]
Now in the following section, we will present why doing DP for LSR energy will face a fundamentally difficulty challenge compared to LSE energy. 
\subsection{Privacy-Induced Boundary Instability}
\label{subsec:sp-boundary}
The most significant feature about DP in LSR energy is: The LSR update is not merely an average but an average of a subset of the bank fundamentally different than the existing literature studying DP for sample mean or DP for LSE energy. This leads to a central challenge: a small change in the current state can change the active set and produce
an order-$r$ retrieval jump, leading to a substantially large cumulative error after DP. To see this, for the naive recursion
\[
    Y_0^{\mathrm{naive}}=q,
    \qquad
    Y_t^{\mathrm{naive}}=\Phi_D(Y_{t-1}^{\mathrm{naive}})+\xi_t,
\]
the error is
\[
    Y_t^{\mathrm{naive}}-z_t
    =\Phi_D(Y_{t-1}^{\mathrm{naive}})-\Phi_D(z_{t-1})+\xi_t.
\]
This term can remain nonzero for arbitrarily small preceding state error; privacy calibration of $\xi_t$ alone does not control it. The following proposition formally illustrates this phenomenon due to the discontinuity of the update map $\Phi_D$.

\begin{proposition}[Boundary jump]
\label{prop:boundary-non-lipschitz}
Fix $D,z$ and suppose $X_j$ is the unique record satisfying $\|X_j-z\|=r$. Set
\(
z_h=z-hr^{-1}(X_j-z).
\)
For all sufficiently small $h>0$,
\(
I(D,z_h)=I(D,z)\setminus\{j\},
\)
and
\begin{align*}
&\lim_{h\downarrow0}\bigl(\Phi_D(z_h)-\Phi_D(z)\bigr)
\\&=
\begin{cases}
z-X_j, & I(D,z)=\{j\}\\[2mm]
\displaystyle
\frac{1}{|I(D,z)|}
\left(
\frac{\sum_{i\in I(D,z_h)}X_i}{|I(D,z_h)|}-X_j
\right),
& |I(D,z)|\ge2.
\end{cases}
\end{align*}
If this limit is nonzero, then
\[
\frac{\|\Phi_D(z_h)-\Phi_D(z)\|}{\|z_h-z\|}
\longrightarrow\infty,
\]
since $\|z_h-z\|=h\to0$ while the numerator converges to a nonzero constant. Hence $\Phi_D$ is not locally Lipschitz at $z$, and in particular is not uniformly Lipschitz in its state argument.
\end{proposition}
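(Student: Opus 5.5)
The plan is to compute everything explicitly along the ray. The point $z_h$ moves away from $X_j$ along the unit vector
\[
u=r^{-1}(X_j-z).
\]
Note that $\|u\|=1$ because $\|X_j-z\|=r$. Consequently $\|z_h-z\|=h$ exactly. The argument has three steps: show the active set stabilizes, compute the limit in each case, and read off the ratio.

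\textbf{Step 1: the active set stabilizes.} I would first check that record $j$ leaves the active set. Since $X_j-z_h=(X_j-z)+hu=(r+h)u$, we get $\|X_j-z_h\|=r+h>r$, so $j\notin I(D,z_h)$ for every $h>0$. For every other record $i\neq j$, uniqueness of the boundary record gives $\|X_i-z\|\neq r$. Define
\[
h_0=\min_{i\neq j}\bigl|\,r-\|X_i-z\|\,\bigr|>0,
\]
which is strictly positive because $[n]$ is finite. This is the only place where uniqueness is used. By the triangle inequality, $\bigl|\|X_i-z_h\|-\|X_i-z\|\bigr|\le h$. Hence for $0<h<h_0$, every record strictly inside the ball around $z$ stays strictly inside the ball around $z_h$, and every record strictly outside stays outside. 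This proves $I(D,z_h)=I(D,z)\setminus\{j\}$ for all $0<h<h_0$.

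\textbf{Step 2: compute the limit.} Write $m=|I(D,z)|\ge1$, which is nonzero because $j\in I(D,z)$, and let $S=\sum_{i\in I(D,z_h)}X_i$. By Step 1, $S$ does not depend on $h$ once $h<h_0$.
\begin{itemize}
\item If $m=1$, then $I(D,z_h)=\varnothing$, so by the definition of $\Phi_D$ we have $\Phi_D(z_h)=z_h$. Also $\Phi_D(z)=X_j$. Therefore $\Phi_D(z_h)-\Phi_D(z)=z_h-X_j\to z-X_j$ as $h\downarrow0$.
\item If $m\ge2$, then $\Phi_D(z_h)=S/(m-1)$ is constant in $h$, and $\Phi_D(z)=(S+X_j)/m$. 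The algebra
\[
\frac{S}{m-1}-\frac{S+X_j}{m}=\frac{S}{m(m-1)}-\frac{X_j}{m}=\frac1m\Bigl(\frac{S}{m-1}-X_j\Bigr)
\]
gives the stated limit, which in this case is attained exactly for all small $h$.
\end{itemize}

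\textbf{Step 3: the ratio blows up.} Let $L$ denote the limit from Step 2 and suppose $L\neq0$. Then the numerator $\|\Phi_D(z_h)-\Phi_D(z)\|$ converges to $\|L\|>0$, while $\|z_h-z\|=h\to0$, so the ratio diverges. For the first case I would also remark that $\|z-X_j\|=r>0$, so the jump there is always nonzero. If $\Phi_D$ were $K$-Lipschitz on some neighborhood of $z$, the ratio would be bounded by $K$, which is a contradiction. Hence $\Phi_D$ is not locally Lipschitz at $z$, and a fortiori not uniformly Lipschitz.

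I do not expect a real obstacle. The only delicate point is Step 1: choosing $h_0$ correctly and handling the empty-set convention of $\Phi_D$ in the case $m=1$. There $\Phi_D(z_h)$ equals the moving point $z_h$ rather than a constant, so a genuine limit (rather than an exact identity) is required.
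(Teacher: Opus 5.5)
Your proposal is correct and follows essentially the same route as the paper's proof. Both compute $\|X_j-z_h\|=r+h$, show the active set stabilizes for small $h$, treat the empty-gate case $I(D,z)=\{j\}$ via $\Phi_D(z_h)=z_h\to z$, and use the same averaging algebra when $|I(D,z)|\ge2$. Your explicit threshold $h_0=\min_{i\neq j}\bigl|r-\|X_i-z\|\bigr|$, combined with the triangle inequality, makes rigorous the step the paper dismisses as ``by continuity.''
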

This boundary instability presents a central challenge in LSR-DP and it motivates our proposed method in the following section.

\section{Proposed Method: BRACE}
\label{sec:brace}
BRACE keeps the clean trajectory $(z_t)_{t=0}^T$ internal and releases
only $(Y_t)_{t=1}^T$. Initialize
\(
z_0=Y_0=q.
\)
For $t=1,\ldots,T$, perform the following steps. Let $\gamma_t>0$ and $(\varepsilon_t,\delta_t)$ be public calibration
parameters. For every fixed history $y_{<t}$,
$\nu_t(\cdot;\overline\Delta_t,\gamma_t,\varepsilon_t,\delta_t)$
denotes a centered noise law calibrated so that the conditional
mechanism
\(
M(D) = Q_t(D;y_{<t})+\xi_t
\)
is $(\varepsilon_t,\delta_t)$-DP.

\noindent\textbf{Step 1: Compute retrieval updates.}
Set
\[
\begin{aligned}
I_t &= I(D,z_{t-1}), \qquad J_t = I(D,Y_{t-1}),\\
z_t &= \Phi_D(z_{t-1}), \qquad a_t = \Phi_D(Y_{t-1}).
\end{aligned}
\]
\medskip
\noindent\textbf{Step 2: Compute the correction.}
We choose $R_t\in\mathbb R^d$ by a publicly specified measurable rule applied to $(D,Y_{<t})$ before drawing $\xi_t$. For the selected rule, set
\begin{align}
&R_t\nonumber\\&=
\begin{cases}
\displaystyle
\frac{\sum_{i\in I_t\setminus J_t}(X_i-a_t)
-\sum_{i\in J_t\setminus I_t}(X_i-a_t)}
{|I_t|},
& I_t\neq\varnothing,\\[2mm]
z_{t-1}-a_t,
& I_t=\varnothing.
\end{cases}
\label{eq:active-set-correction}
\end{align}
\medskip
\noindent\textbf{Step 3: Activate the correction.}
Construct $H_t\in\{0,1\}$ as a public measurable function of $(D,Y_{<t})$. With a threshold hyperparameter $\kappa>0$, compute
\begin{equation}
H_t
=
\mathbf 1\{\|R_t\|^2>\kappa T r^2\}.
\label{eq:threshold-selection-rule}
\end{equation}

\medskip
\noindent\textbf{Step 4: Project the corrected query.}
Set
\[
Q_t
=
\Pi_{\overline B_2(Y_{t-1},r)}
\bigl(a_t+H_tR_t\bigr),
\]
where $\Pi$ denotes Euclidean projection onto the closed ball $\overline B_2(Y_{t-1},r)$.

\medskip
\noindent\textbf{Step 5: Compute the sensitivity scale.}
For a bank $B$ and fixed history $y_{<t}$, the local sensitivity of the complete query is
\[
\sup_{B'\sim B}
\left\|
Q_t(B;y_{<t})-Q_t(B';y_{<t})
\right\|.
\]
We construct an upper bound $L_t(B;y_{<t})$ as follows. If $H_s(B';y_{<s})=0$ for all $s\le t$ and all $B'$ with $d_H(B,B')\le1$, set
\[
L_t(B;y_{<t})=\frac{2r}{\max\{1,|I(B,y_{t-1})|\}}.
\]
Otherwise, run \textsc{CorrectedCountSearch}. For each $j\in[n]$ and replacement $x\in\mathbb R^d$, let $B'=B^{j\leftarrow x}$ and recompute Steps~1--4 recursively under $B'$ with $y_{<t}$ fixed. Propagate all clean and released memberships, using
\[
\begin{aligned}
\beta_s^I &= \mathbf 1\{\|x-z_{s-1}(B')\|\le r\},\\
\beta_s^J &= \mathbf 1\{\|x-y_{s-1}\|\le r\}.
\end{aligned}
\]
Enumerate the corresponding membership, activation, and projection branches and retain only feasible branches. Set
\begin{align*}
&L_t(B;y_{<t})
\\&=
\sup_{\substack{
j\in[n],
x\in\mathbb R^d\\
\text{feasible branch}
}}
\left\|
Q_t(B;y_{<t})
-
Q_t(B^{j\leftarrow x};y_{<t})
\right\|.
\end{align*}

Then set $\widehat m_t
=
\frac{2r}{\overline{\Delta}_t(D;y_{<t})}$, where
\[
\begin{aligned}
\overline{\Delta}_t(D;y_{<t})
&=
\max\left\{
\frac{2r}{n},
\mathsf S_{\gamma_t}
\!\left[L_t(\cdot;y_{<t})\right](D)
\right\}.
\end{aligned}
\]

\medskip
\noindent\textbf{Step 6: Add noise, project, and release.}
Draw
\[
\xi_t\sim\nu_t\!\left(\cdot;\overline\Delta_t,\gamma_t,\varepsilon_t,\delta_t\right),
\]
and set
\[
\widetilde Y_t=Q_t+\xi_t,
\qquad
Y_t=\Pi_{\overline B_2(q,(t+1)r)}(\widetilde Y_t).
\]
Release $Y_t$ and use it in the next iteration. Select $\varepsilon>0$ and $0<\delta<1$, and choose the public smoothing parameters according to \eqref{eq:app-gaussian-smoothing}. Set
\[
\xi_t
\sim
\mathcal N\!\left(
0,
\frac{T\overline{\Delta}_t^{\,2}}
{2\rho_{\varepsilon,\delta}}I_d
\right),
\]
where
\(
\rho_{\varepsilon,\delta}
=
\left(
\sqrt{\log(1/\delta)+\varepsilon}
-
\sqrt{\log(1/\delta)}
\right)^2.
\)
The complete algorithm is shown in Algorithm \ref{alg:brace-summary}.

\begin{remark}[Activation, sensitivity, and localization]
\textbf{Activation.}
The threshold $\kappa$ controls how often correction is activated. For the selected rule,
\(
\|Q_t-z_t\|^2
\le
\max\{\kappa Tr^2,\|Y_{t-1}-z_{t-1}\|^2\}.
\)
Larger $\kappa$ gives fewer activations and lower search complexity, but permits a larger residual error.

\textbf{Localization.}
The replacement search may be restricted to
\(
\overline B_2(q,tr)
\cup
\bigcup_{s=1}^t\overline B_2(y_{s-1},r),
\)
together with one outside representative for each replacement index.

\textbf{Sensitivity.}
Since $L_t(B;y_{<t})\le2r$,
\(
\frac{2r}{n}\le\overline{\Delta}_t(D;y_{<t})\le2r,
\qquad
1\le\widehat m_t\le n.
\)
Here $\overline{\Delta}_t$ is a smooth upper bound on the local sensitivity of $Q_t$, and $\widehat m_t=2r/\overline{\Delta}_t$ is an effective sensitivity count rather than an active-set size.
\end{remark}

\begingroup
\begin{algorithm}[t]
\caption{Selected BRACE retrieval}
\label{alg:brace-summary}

\KwIn{Private bank $D$; public $q,r,T,\kappa$; privacy parameters $\varepsilon>0$, $\delta\in(0,1)$; smoothing parameters $(\gamma_t)_{t=1}^T$ satisfying \eqref{eq:app-gaussian-smoothing}.}
\KwOut{Released trajectory $(Y_1,\ldots,Y_T)$.}

$z_0\gets q$, $Y_0\gets q$\;

$\rho_{\varepsilon,\delta}\gets
\left(
\sqrt{\log(1/\delta)+\varepsilon}
-
\sqrt{\log(1/\delta)}
\right)^2$\;

\For{$t=1,\ldots,T$}{

    $I_t\gets I(D,z_{t-1})$,
    $J_t\gets I(D,Y_{t-1})$\;

    $z_t\gets\Phi_D(z_{t-1})$,
    $a_t\gets\Phi_D(Y_{t-1})$\;

    $R_t\gets z_t-a_t$\;

    $H_t\gets
    \mathbf 1\{\|R_t\|^2>\kappa Tr^2\}$\;

    $Q_t\gets
    \Pi_{\overline B_2(Y_{t-1},r)}
    (a_t+H_tR_t)$\;

    Construct $L_t(\cdot;Y_{<t})$ and compute
    $\mathsf S_{\gamma_t}[L_t(\cdot;Y_{<t})](D)$\;

    $\overline\Delta_t\gets
    \max\left\{
    \dfrac{2r}{n},
    \mathsf S_{\gamma_t}[L_t(\cdot;Y_{<t})](D)
    \right\}$\;

    $\widehat m_t\gets\dfrac{2r}{\overline\Delta_t}$\;

    Draw
    $\xi_t\sim
    \mathcal N\!\left(
    0,
    \dfrac{T\overline\Delta_t^{\,2}}
    {2\rho_{\varepsilon,\delta}}I_d
    \right)$\;

    $\widetilde Y_t\gets Q_t+\xi_t$\;
    $Y_t\gets\Pi_{\overline B_2(q,(t+1)r)}(\widetilde Y_t)$\;

    Release $Y_t$\;
}

\Return{$(Y_1,\ldots,Y_T)$}\;
\end{algorithm}
\endgroup

\section{Theoretical Results}
\label{sec:theory}
\subsection{DP Guarantee of BRACE}
For the released trajectory
\(
M(D)=(Y_1,\ldots,Y_T),
\)
define
\(
\delta_{\mathrm{HS}}(\varepsilon)
=
\sup_{D\sim D'}
D_{e^\varepsilon}\!\left(
\mathcal L(M(D))
\,\middle\|\,
\mathcal L(M(D'))
\right).
\)

\begin{theorem}[Central differential privacy of BRACE]
\label{thm:brace-privacy}
For each $t$, conditional on every fixed released history $Y_{<t}=y_{<t}$, suppose the noise law
\(
\xi_t\sim
\nu_t(\cdot;\overline\Delta_t,\gamma_t,\varepsilon_t,\delta_t)
\)
is calibrated so that
\(
Q_t(D;y_{<t})+\xi_t
\)
is $(\varepsilon_t,\delta_t)$-differentially private whenever the sensitivity of $Q_t(\cdot;y_{<t})$ is bounded by $\overline\Delta_t$. If the privacy accountant satisfies
\[
\sum_{t=1}^T\varepsilon_t\le\varepsilon,
\qquad
\sum_{t=1}^T\delta_t\le\delta,
\]
then Algorithm~\ref{alg:brace-summary} is centrally
$(\varepsilon,\delta)$-differentially private. Thus, any conditionally calibrated noise law together with a valid composition accountant yields an $(\varepsilon,\delta)$-DP BRACE trajectory.
\end{theorem}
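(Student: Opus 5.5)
The plan is to treat BRACE as an adaptive composition of $T$ conditional mechanisms and then apply a basic composition theorem for $(\varepsilon,\delta)$-DP under adaptivity. Fix neighbouring banks $D\sim D'$.

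\textbf{Step 1: the $t$-th release is a randomized map of $(D,y_{<t})$.} The internal quantities $z_{t-1}$, $I_t$, $J_t$, $a_t$, $R_t$, $H_t$ and $Q_t$ are deterministic measurable functions of $(D,y_{<t})$. This holds because:
\begin{itemize}
\item the clean trajectory depends only on $D$ and the public $q$;
\item Steps 2--4 use publicly specified measurable rules applied to $(D,Y_{<t})$ before $\xi_t$ is drawn.
\end{itemize}
Similarly, $\overline\Delta_t(D;y_{<t})$ is a measurable function of $(D,y_{<t})$. So, conditionally on $Y_{<t}=y_{<t}$, the release $Y_t$ is produced by the kernel $K_t^D(\cdot\mid y_{<t})$. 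This kernel is the law of $\Pi_{\overline B_2(q,(t+1)r)}(Q_t(D;y_{<t})+\xi_t)$ with $\xi_t\sim\nu_t(\cdot;\overline\Delta_t(D;y_{<t}),\dots)$.

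\textbf{Step 2: each conditional kernel is $(\varepsilon_t,\delta_t)$-DP.} By hypothesis, the unprojected mechanism $Q_t(\cdot;y_{<t})+\xi_t$ is $(\varepsilon_t,\delta_t)$-DP. The final projection onto the public ball $\overline B_2(q,(t+1)r)$ is a fixed measurable map that does not touch $D$. By post-processing, for every measurable $A\subseteq\mathbb R^d$ and every history $y_{<t}$,
\[
K_t^D(A\mid y_{<t})\le e^{\varepsilon_t}K_t^{D'}(A\mid y_{<t})+\delta_t .
\]
The noise scale $\overline\Delta_t$ itself depends on $D$. The hypothesis covers this, because the calibration is of the smooth-sensitivity type: the dilation property $\mathsf S_\gamma[L_f](D)\le e^{\gamma}\mathsf S_\gamma[L_f](D')$ and $L_t\ge \operatorname{LS}$ are exactly what such a calibration uses. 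Here $L_t$ upper bounds the local sensitivity of $Q_t$ by construction (exhaustive search, or $2r/\max\{1,|I|\}$ in the inactive case). I would note this point but take it as part of the assumed calibration.

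\textbf{Step 3: adaptive composition by induction on $T$.} Write $P^D_{t}$ for the law of $(Y_1,\dots,Y_t)$. The claim is that for every measurable $A\subseteq(\mathbb R^d)^t$,
\[
P^D_t(A)\le e^{\sum_{s\le t}\varepsilon_s}P^{D'}_t(A)+\sum_{s\le t}\delta_s .
\]

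The tool for the inductive step is the hockey-stick characterization, $D_{e^\varepsilon}(P\|P')\le\delta$ if and only if $(\varepsilon,\delta)$-DP. Take a measurable $A\subseteq(\mathbb R^d)^{t}$ and let $A_{y_{<t}}$ denote its section at $y_{<t}$. Then:
\begin{itemize}
\item Write $P^D_t(A)=\int K_t^D(A_{y_{<t}}\mid y_{<t})\,P^D_{t-1}(dy_{<t})$.
\item Bound the integrand by $\min\{1,\,e^{\varepsilon_t}K_t^{D'}(A_{y_{<t}}\mid y_{<t})\}+\delta_t$.
\item The function $g(y_{<t})=\min\{1,e^{\varepsilon_t}K_t^{D'}(A_{y_{<t}}\mid y_{<t})\}$ takes values in $[0,1]$.
\item Applying the inductive hypothesis in its functional form, $\int g\,dP\le e^{\varepsilon}\int g\,dP'+\delta$ for $[0,1]$-valued $g$ (obtained by the layer-cake representation), gives the claim with exponent $\sum_{s\le t}\varepsilon_s$ and additive term $\sum_{s\le t}\delta_s$.
\end{itemize}

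Setting $t=T$ and using $\sum\varepsilon_t\le\varepsilon$ and $\sum\delta_t\le\delta$ gives $\delta_{\mathrm{HS}}(\varepsilon)\le\delta$, which is central $(\varepsilon,\delta)$-DP in the sense of Definition~\ref{def:central-dp}. If a tighter accountant such as zCDP is used instead, the same conditional structure lets one replace the additive bound by that accountant's composition rule. This is the sense of ``valid composition accountant'' in the statement.

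\textbf{Main obstacle.} The delicate step is measurability and well-definedness of the conditional kernels. The objects $L_t$ and $\mathsf S_{\gamma_t}[L_t]$ are suprema over uncountable families of replacement banks and feasible branches, and they must be jointly measurable in $y_{<t}$ for the disintegration in Step 3 to make sense. My approach is to argue this first:
\begin{itemize}
\item By the localization remark, the suprema reduce to suprema over continuous functions on compact balls plus finitely many branch indicators.
\item These can be taken over countable dense sets, which yields Borel measurability.
\end{itemize}
If that argument fails, I would fall back to universal measurability, which suffices for the integrals used above.
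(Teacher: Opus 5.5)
Your proof is correct, but your composition step takes a different route from the paper's.

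\textbf{The paper's route.} The paper fixes ordered neighbours and dominates each pair of conditional release kernels by the mixture $\mu_t=\tfrac12(K_t^D+K_t^{D'})$. It then takes jointly measurable densities $k_t^D,k_t^{D'}$, appealing to standard Borel spaces, and truncates to $\widetilde k_t=\min\{k_t^D,e^{\varepsilon_t}k_t^{D'}\}$, so that $\int\widetilde k_t\,d\mu_t\ge1-\delta_t$. Integrating $\prod_t\widetilde k_t$ backwards against $\mu=\prod_t\mu_t$ gives mass at least $\prod_t(1-\delta_t)$. Since $\prod_t\widetilde k_t$ lies below both $\prod_t k_t^D$ and $e^{\sum_t\varepsilon_t}\prod_t k_t^{D'}$, the bound follows in one step.

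\textbf{Your route.} You induct on $t$ at the level of set functions. You bound section probabilities and push the $[0,1]$-valued function $g$ through the inductive hypothesis using the layer-cake form of DP.

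\textbf{What each buys.} Your argument is more elementary. It needs only measurability of $y_{<t}\mapsto K_t^{D'}(A_{y_{<t}}\mid y_{<t})$ and never uses Radon--Nikodym derivatives of kernels. The paper's argument produces explicit transcript densities $\prod_t k_t^D$ relative to a common $\mu$. It reuses these for the exact hockey-stick expression for $\delta_{\mathrm{HS}}(\varepsilon)$ and for the R\'enyi composition behind the Gaussian calibration. It also yields the slightly sharper deficit $1-\prod_t(1-\delta_t)$ before relaxing to $\sum_t\delta_t$. Post-processing through the public output projection is handled the same way in both.

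\textbf{Measurability.} The paper does not use countable dense sets here; it uses semialgebraic geometry. The finitely many branch graphs form a semialgebraic set, and the envelope is rewritten as a finite maximum over Hamming radii $0\le k\le n$. Quantifier elimination then makes $Q_t$, the all-neighbour nonactivation condition, $L_t$ and $\mathsf S_{\gamma_t}[L_t]$ Borel.

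Your countable-dense reduction would need more care. The branch maps jump at the closed gate boundaries and at the activation and projection thresholds, as the boundary-jump proposition illustrates. Feasible cells can also be lower-dimensional, so a supremum over a fixed countable dense set of replacements may miss values.

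Your universal-measurability fallback is not automatic either. The envelope takes a supremum over banks of a function whose case split depends on a condition quantified over all neighbours, so its superlevel sets are projections of sets that are not obviously analytic. The semialgebraic argument is the clean fix, and this point does not affect your composition argument itself.
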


\subsection{Terminal and Maximum-Path Minimax Risk}
\label{sec:minimax}
Let $\Pall$ denote the class of all probability laws on $\mathbb R^d$.
Throughout this subsection,
$\mathbb E_{D\sim P^n,M}$ denotes expectation jointly over the
i.i.d.\ sample $D\sim P^n$ and the internal randomness of the mechanism
$M$, conditional on the public parameters $(q,r,T)$. For nonnegative $a,b$, write $a\asymp b$ when each is bounded by a finite positive constant times the other, with the dependence of the constants specified below.
For $M(D)=(Y_1,\ldots,Y_T)$, define 
\[
\begin{aligned}
    \Rterm(M)
        &=\sup_{P\in\Pall}\mathbb E_{D\sim P^n,M}\|Y_T-z_T(D)\|^2,\\
    \mathcal R_{\max,T}(M)
        &=\sup_{P\in\Pall}\mathbb E_{D\sim P^n,M}
            \max_{1\le t\le T}\|Y_t-z_t(D)\|^2.
\end{aligned}
\]
Their minimax risks are
\[
\begin{aligned}
    \Rstar(\varepsilon,\delta)
        &=\inf_{M:\,(\varepsilon,\delta)\text{-DP}}\Rterm(M),\\
    \mathcal R_{\max,T}^{\star}(\varepsilon,\delta)
        &=\inf_{M:\,(\varepsilon,\delta)\text{-DP}}
            \mathcal R_{\max,T}(M).
\end{aligned}
\]
The terminal risk measures the final retrieval error, whereas the maximum-path risk controls all released states. The output projection gives a bound uniform in the memory size and dimension.

\begin{theorem}[Minimax-rate optimality]
\label{thm:minimax-optimality}
Let $M$ be a centrally $(\varepsilon,\delta)$-DP BRACE mechanism with the output projection in Step~6, where $\varepsilon\ge0$ and $0\le\delta<1$.

\textup{(i) Fixed horizon.} For fixed $T,\varepsilon,\delta$,
\[
\begin{aligned}
\Rterm(M)
&\asymp\Rstar(\varepsilon,\delta)\asymp r^2,\\
\mathcal R_{\max,T}(M)
&\asymp\mathcal R_{\max,T}^{\star}(\varepsilon,\delta)\asymp r^2.
\end{aligned}
\]
The comparison constants may depend on $(T,\varepsilon,\delta)$ but are independent of $n,d,r$.

\textup{(ii) Growing horizon.} In the region
\begin{equation}
T\ge2,\qquad n\ge2^{T+2}-3,
\label{eq:long-chain-region}
\end{equation}
for fixed $\varepsilon,\delta$,
\[
\begin{aligned}
\Rterm(M)
&\asymp\Rstar(\varepsilon,\delta)\asymp T^2r^2,\\
\mathcal R_{\max,T}(M)
&\asymp\mathcal R_{\max,T}^{\star}(\varepsilon,\delta)\asymp T^2r^2.
\end{aligned}
\]
The comparison constants may depend on $(\varepsilon,\delta)$ but are independent of $n,d,T,r$.
\end{theorem}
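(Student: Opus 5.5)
The plan has two halves, run separately in each regime: an upper bound for BRACE that is uniform over $P$, $n$ and $d$, and a lower bound valid for every $(\varepsilon,\delta)$-DP mechanism. Once both hold, the chains $\Rterm(M)\ge\Rstar$ and $\mathcal R_{\max,T}(M)\ge\mathcal R^\star_{\max,T}\ge\Rstar$ (for the last, note $\max_t\|Y_t-z_t\|^2\ge\|Y_T-z_T\|^2$) collapse to the stated $\asymp$ relations.

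For the \textbf{upper bound} I would use only the output projection in Step~6 and the a priori clean bound. By construction $Y_t\in\overline B_2(q,(t+1)r)$ deterministically. The clean trajectory satisfies $\|z_t-q\|\le tr$, as recorded after the definition of $\Phi_B$. Hence
\[
\|Y_t-z_t\|\le(2t+1)r\le 3Tr
\]
for every realisation, every bank $D$, every $n$ and every $d$. It follows that
\[
\max_{1\le t\le T}\|Y_t-z_t\|^2\le 9T^2r^2
\]
surely. This gives $\mathcal R_{\max,T}(M)\le 9T^2r^2$ and $\Rterm(M)\le 9T^2r^2$ with an absolute constant. For fixed $T$ this is $\lesssim r^2$ with a constant depending on $T$, which gives the upper half of (i); the growing-horizon upper half of (ii) is immediate. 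No property of the noise is needed here, only that the projection keeps the output inside the ball, so the argument is genuinely dimension-free.

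For the \textbf{lower bound} I would use a two-point (Le Cam) argument combined with the DP constraint, in the form of the group-privacy/coupling bound. Suppose the laws $P_0,P_1$ generate samples $D$ and $D'$ that can be coupled to differ in one record with constant probability, or even deterministically via point masses plus one perturbed record. The DP guarantee together with Definition~\ref{def:central-dp} then gives, for any set $A$,
\[
\mathbb P\{M(D)\in A\}\le e^\varepsilon\mathbb P\{M(D')\in A\}+\delta .
\]
The standard testing reduction now yields
\[
\max_{k}\ \mathbb E\|Y_T-z_T(D_k)\|^2\ \ge\ c(\varepsilon,\delta)\,\|z_T(D_0)-z_T(D_1)\|^2/4,
\qquad c(\varepsilon,\delta)=\frac{1-\delta}{1+e^\varepsilon}>0
\]
(using $\delta<1$). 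Because $\sup_P$ ranges over all laws, deterministic banks suffice: take $P_k=\delta$-masses, so that $D_0,D_1$ are fixed banks. For (i) I take $D_0=(q,\ldots,q)$ and let $D_1$ replace a single record by $q+re$. Then $z_T(D_0)=q$ while $z_T(D_1)$ is displaced. To obtain an order-$r$ separation rather than $r/n$, I would exploit the boundary jump of Proposition~\ref{prop:boundary-non-lipschitz}. Concretely, let $D_0$ be all records at $q+re$ except one sitting just outside the ball. Then $z_1$ under $D_0$ versus $D_1$ (that record moved to distance exactly $\le r$ on the opposite side) differs by a constant multiple of $r$ independently of $n$. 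Even more simply, let $D_0$ have all records at $q+2re$ (empty active set, so $z_t\equiv q$) and let $D_1$ place one record at $q+re$. Then $z_1=q+re$, and $\|z_T(D_0)-z_T(D_1)\|\ge r$, giving $\Rstar\gtrsim r^2$.

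For (ii) I must make a single replaced record move the clean trajectory by order $Tr$ over $T$ steps. I would build a chain of clusters at $q+ kre$ for $k=1,\ldots,T$ with geometrically growing multiplicities. The idea is that once the trajectory enters a cluster it is pulled, within one step, to distance exactly $r$ of the next cluster. Each cluster must outweigh the previous ones in the average, which is where multiplicities $2^k$ and the condition $n\ge2^{T+2}-3$ in \eqref{eq:long-chain-region} enter. Under $D_0$ the trigger record is absent and $z_t\equiv q$. Under $D_1$ the trigger record at $q+re$ starts the walk, and $z_T\approx q+cTr$. The same Le Cam bound then gives $\Rstar\gtrsim T^2r^2$ with constants depending only on $(\varepsilon,\delta)$.

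The \textbf{main obstacle} is verifying this chain construction exactly. The averaged state must land in the closed ball around the next cluster at every step, despite the already-visited clusters still being active in $I(D,z)$, which drags the mean backwards. I would first try weights $w_k=2^{k}$ placed at spacing slightly less than $r$, and check by induction that $z_k$ is within $r$ of cluster $k+1$ but more than $r$ from clusters $\le k-1$. If leftover clusters are still active, I would instead choose positions so that the window $[z-r,z+r]$ contains only two consecutive clusters, making the recursion two-term and the geometric weights sufficient. The leftover dummy records, numbering $n-\sum w_k$, are parked far away.
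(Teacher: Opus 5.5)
\textbf{Part (i).} Your upper bound matches the paper's: output projection plus $\|z_t-q\|\le tr$ gives $(2T+1)^2r^2\le 9T^2r^2$. Your two-point inequality $\max_k\mathbb E\|Y_T-z_T(D_k)\|^2\ge\frac{1-\delta}{4(1+e^\varepsilon)}\|z_T(D_0)-z_T(D_1)\|^2$ for neighbouring banks is also correct. The gap is the step ``deterministic banks suffice.''

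The risks are suprema over product laws, $D\sim P^n$. A point mass $P=\delta_y$ therefore produces only the constant bank $(y,\ldots,y)$. For $n\ge2$, your $D_1$ (one record at $q+re$, the rest at $q+2re$) is not $P^n$-distributed for any $P$, and two constant banks are at Hamming distance $n$. Point masses alone cannot give an $n$-free lower bound. For $\varepsilon>0$, take $g(x)=x$ on $\overline B_2(q,r)$ and $g(x)=q$ off it, and add noise calibrated to sensitivity $2r/n$ to $n^{-1}\sum_i g(X_i)$. This mechanism is $(\varepsilon,\delta)$-DP, equals $z_T$ plus noise on every constant bank, and so has risk vanishing in $n$ there.

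You mention the right fix (a coupling that differs in one record with constant probability) and then discard it. Take $P_0=\delta_b$ and $P_1=(1-1/n)\delta_b+(1/n)\delta_x$. Under $P_1^n$, exactly one record equals $x$ with probability $(1-1/n)^{n-1}\ge e^{-1}$, at a uniform position. The output law averaged over that position stays $(\varepsilon,\delta)$-close to the law at $(b,\ldots,b)$, since every such bank is a neighbour of $(b,\ldots,b)$. With this repair, your Le Cam bound proves the lower half of (i) with constant $\frac{1-\delta}{4e(1+e^\varepsilon)}$. That is a valid and simpler alternative to the paper's volume/packing argument over $x\in\overline B_2(q,r)$.

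\textbf{Part (ii).} This is where the proposal breaks. Under product laws the cluster multiplicities are multinomial, so you cannot prescribe $w_k=2^k$. You must instead show that the chain advances at every step with probability bounded below uniformly in $n$ and $T$. The paper puts mass $2^j/(n-1)$ on each cluster; this is a valid probability vector exactly when $n\ge2^{T+2}-3$, which is where the region comes from. It then uses Markov bounds on multinomial generating functions, such as $\mathbb E\,2^{-N_3}$ and $\mathbb E(3/4)^{N_{s+1}-N_{s-1}-2N_{s-2}}$, to show that the propagation event has probability at least $3/4$.

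Independently, the geometry you sketch stalls even with deterministic counts. Take spacing $r$ (or any spacing in $(3r/4,r]$) and weights $2^k$. A state at cluster $k$, with clusters $k\pm1$ present, moves $\frac37$ of a spacing forward. Its gate then contains only clusters $k$ and $k+1$, and their weighted mean, $\frac23$ of a spacing past cluster $k$, is a fixed point of $\Phi_D$. With equal spacing, your two-cluster fallback stalls the same way.

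The missing idea is a finer grid. The paper takes $x_j=q+\frac{(j+1)r}{2}e_1$, so a state strictly between $x_{s-1}$ and $x_s$ sees exactly $x_{s-2},\ldots,x_{s+1}$. It then moves strictly past $x_s$ when $N_{s+1}>N_{s-1}+2N_{s-2}$, and the means $2^{s+1}$ versus $2^s$ leave room for the probability bounds. Once that event is controlled, you can take the coupling in which $D_1$ is $D_0$ with a uniformly chosen row replaced by the trigger. Applying your two-point inequality on each section of this coupling and integrating then gives $\Rstar\gtrsim T^2r^2$, with constants depending only on $(\varepsilon,\delta)$.
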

\begin{remark} For fixed $T$, increasing $n$ cannot improve the worst-case rate over $\Pall$, and BRACE achieves this optimal $r^2$ rate. The same rate holds for the stronger full-trajectory loss, showing that releasing and controlling the whole path does not worsen the minimax order. When $T$ grows, the matching $T^2r^2$ upper and lower bounds show that BRACE also has the optimal dependence on the retrieval horizon in the long-chain regime.
\end{remark}

\subsection{Trajectory Central Limit Theorems for BRACE}

Throughout this subsection, let $D_n=(X_1,\ldots,X_n)\sim P^n$ for a fixed law $P$, with $q,d,T,r$ fixed as $n\to\infty$. Write $\xi_{n,t}$ for the privacy noise at iteration $t$ for sample size $n$. Convergence is under the joint law of the data and mechanism randomness. For $X\sim P$, define
\[
\bar z_0=q,\qquad
p_t=P(\|X-\bar z_{t-1}\|\le r),
\]
and, whenever $p_t>0$,
\[
\bar z_t=
\frac{\mathbb E[X\mathbf 1\{\|X-\bar z_{t-1}\|\le r\}]}
{p_t}.
\]
For vectors $V_1,\ldots,V_T\in\mathbb R^d$, write
\[
V_{1:T}=(V_1^\top,\ldots,V_T^\top)^\top\in\mathbb R^{dT}.
\]

\begin{assumption}[First-order BRACE conditions]
\label{ass:brace-clt-policy}
There exist random vectors $C_t,\xi_t\in\mathbb R^d$, $t=1,\ldots,T$, such that:

\textup{(i)} Jointly over $t=1,\ldots,T$,
\[
\sqrt n\,(z_{n,t}-a_{n,t}-H_{n,t}R_{n,t})
\Longrightarrow C_t.
\]

\textup{(ii)}
\[
\sqrt n\,\max_{1\le t\le T}
\|Q_{n,t}-(a_{n,t}+H_{n,t}R_{n,t})\|
\xrightarrow{P}0.
\]

\textup{(iii)} Jointly over $t=1,\ldots,T$,
\[
\sqrt n\,\xi_{n,t}\Longrightarrow\xi_t,
\]
where $\mathbb E\xi_t=0$, and the convergence in \textup{(i)} and \textup{(iii)} is joint.
\end{assumption}

Condition~\textup{(i)} controls the residual correction error at the root-$n$ scale, Condition~\textup{(ii)} requires the pre-noise projection to have no first-order effect, and Condition~\textup{(iii)} characterizes the first-order contribution of the privacy noise. These are local asymptotic conditions and do not require global smoothness of the retrieval map.

\begin{theorem}[Trajectory CLT for BRACE]
\label{thm:general-brace-first-order-limit}
Under Assumption~\ref{ass:brace-clt-policy},
\[
\sqrt n\,(Y_{n,1:T}-z_{n,1:T})
\Longrightarrow
-C_{1:T}+\xi_{1:T}.
\]
If $(C_{1:T},\xi_{1:T})$ is jointly Gaussian, then
\[
\begin{aligned}
\sqrt n\,(Y_{n,1:T}-z_{n,1:T})
&\Longrightarrow
\mathcal N_{dT}\!\left(-\mathbb E C_{1:T},\right.\\
&\qquad\left.
\operatorname{Cov}(-C_{1:T}+\xi_{1:T})\right).
\end{aligned}
\]
\end{theorem}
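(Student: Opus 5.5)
The plan is to write the released error at each step as a sum of four pieces and then apply Slutsky's lemma jointly over $t=1,\ldots,T$. For each $n$ and $t$,
\[
Y_{n,t}-z_{n,t}
=-(z_{n,t}-a_{n,t}-H_{n,t}R_{n,t})
+\bigl(Q_{n,t}-(a_{n,t}+H_{n,t}R_{n,t})\bigr)
+\xi_{n,t}
+(Y_{n,t}-\widetilde Y_{n,t}).
\]
This uses only $\widetilde Y_{n,t}=Q_{n,t}+\xi_{n,t}$ and adds and subtracts $a_{n,t}+H_{n,t}R_{n,t}$. After multiplying by $\sqrt n$ and stacking over $t$, the first term converges to $-C_{1:T}$ by Assumption~\ref{ass:brace-clt-policy}(i). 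The second term is $o_P(1)$ uniformly in $t$ by (ii). The third converges to $\xi_{1:T}$ by (iii), and (i) and (iii) hold jointly. The continuous mapping theorem applied to the sum then gives the joint limit $-C_{1:T}+\xi_{1:T}$, provided the fourth, output-projection term is also $o_P(n^{-1/2})$.

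The main obstacle is that fourth term, since the assumption says nothing about the Step~6 projection onto $\overline B_2(q,(t+1)r)$. The term vanishes whenever $\|\widetilde Y_{n,t}-q\|\le(t+1)r$. The argument is to show this event has probability tending to one for each of the finitely many $t$, so that $\sqrt n(Y_{n,t}-\widetilde Y_{n,t})=0$ with probability $\to1$, which is stronger than $o_P(1)$.

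Two facts give this. First, $\sqrt n\,\xi_{n,t}$ is tight by (iii), so $\|\xi_{n,t}\|=O_P(n^{-1/2})\to0$. Second, $Q_{n,t}$ is kept off the boundary by a strict margin:
\[
\|Q_{n,t}-q\|\le\|Q_{n,t}-Y_{n,t-1}\|+\|Y_{n,t-1}-q\|\le r+tr=(t+1)r,
\]
because $Q_{n,t}\in\overline B_2(Y_{n,t-1},r)$ and $Y_{n,t-1}\in\overline B_2(q,tr)$. This bound alone only reaches the boundary, so the margin has to come from the consistency statements. From (i) and (ii), $Q_{n,t}-z_{n,t}=O_P(n^{-1/2})$, and the clean trajectory satisfies $\|z_{n,t}-q\|\le tr$. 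Hence
\[
\|\widetilde Y_{n,t}-q\|\le tr+O_P(n^{-1/2})<(t+1)r
\]
with probability tending to one. This needs nothing beyond the displayed assumptions.

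With the remainder shown negligible, Slutsky's lemma yields the first claim. For the Gaussian statement, a joint Gaussian $(C_{1:T},\xi_{1:T})$ makes the linear image $-C_{1:T}+\xi_{1:T}$ Gaussian. Its mean is $-\mathbb E C_{1:T}$ because $\mathbb E\xi_t=0$, and its covariance is $\operatorname{Cov}(-C_{1:T}+\xi_{1:T})$ as displayed. That gives the stated $\mathcal N_{dT}$ limit.
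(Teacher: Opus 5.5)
Your proposal is correct and follows essentially the same route as the paper: the same decomposition of $\sqrt n(\widetilde Y_{n,t}-z_{n,t})$ with Slutsky over the fixed horizon, and the same treatment of the Step~6 projection, where $\|\widetilde Y_{n,t}-z_{n,t}\|=O_P(n^{-1/2})$ together with $\|z_{n,t}-q\|\le tr$ makes the projection inactive with probability tending to one. Your detour through $Q_{n,t}\in\overline B_2(Y_{n,t-1},r)$ is unnecessary, as you note yourself, but it does no harm.
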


\begin{remark}
\textbf{Cross-time dependence.}
The theorem gives a joint limit for the entire released trajectory rather than only the terminal state. Hence the limiting distribution captures the dependence of retrieval errors across different iterations.

\textbf{Finite-sample and asymptotic guarantees.}
The minimax result provides finite-sample worst-case control of the retrieval error, whereas Theorem~\ref{thm:general-brace-first-order-limit} characterizes its first-order distribution as $n\to\infty$. Together, they describe both the magnitude of the error and its asymptotic uncertainty.

\textbf{Error decomposition.}
The limit
\(
-C_{1:T}+\xi_{1:T}
\)
separates the two first-order sources of error: $C_{1:T}$ represents the residual correction error, while $\xi_{1:T}$ represents the limiting contribution of the privacy noise.
\end{remark}

\begin{theorem}[Effect of the effective count]
\label{thm:effective-count-effect}
For the Gaussian noise in Step~6, let $\varepsilon>0$ and $\delta\in(0,1)$ be fixed as $n\to\infty$. Since
\[
\overline\Delta_{n,t}
=
\frac{2r}{\widehat m_{n,t}},
\]
conditional on $(D_n,Y_{<t})$,
\[
\sqrt n\,\xi_{n,t}
\sim
\mathcal N\!\left(
0,
\frac{2Tr^2}{\rho_{\varepsilon,\delta}}
\frac{n}{\widehat m_{n,t}^{\,2}}I_d
\right).
\]
Consequently, for each fixed $t$,
\[
\sqrt n\,\xi_{n,t}
\begin{cases}
\xrightarrow{P}0,
&
\displaystyle
\frac{\widehat m_{n,t}}{\sqrt n}\xrightarrow{P}\infty,
\\[3mm]
\Longrightarrow
\mathcal N\!\left(
0,
\displaystyle
\frac{2Tr^2\ell_t^2}{\rho_{\varepsilon,\delta}}I_d
\right),
&
\displaystyle
\frac{\sqrt n}{\widehat m_{n,t}}
\xrightarrow{P}\ell_t\in(0,\infty),
\\[5mm]
\text{is not tight},
&
\displaystyle
\frac{\widehat m_{n,t}}{\sqrt n}\xrightarrow{P}0.
\end{cases}
\]
Thus $\widehat m_{n,t}$ of order $\sqrt n$ is the transition regime in which the privacy noise contributes at the root-$n$ scale.

Moreover, if
\[
\frac{\widehat m_{n,t}}{n}
\xrightarrow{P}
c_t>0,
\]
then
\[
\sqrt n\,\xi_{n,t}\xrightarrow{P}0,
\qquad
n\xi_{n,t}
\Longrightarrow
\mathcal N\!\left(
0,
\frac{2Tr^2}
{\rho_{\varepsilon,\delta}c_t^2}I_d
\right).
\]
\end{theorem}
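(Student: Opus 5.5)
The proof uses only the explicit Gaussian form of the Step~6 noise, conditionally on $(D_n,Y_{<t})$, and then three standard facts: conditional-to-unconditional convergence, characteristic functions, and Slutsky's lemma.

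\textbf{Step 1: the conditional law.} By the definition $\widehat m_{n,t}=2r/\overline\Delta_{n,t}$ we have $\overline\Delta_{n,t}^{\,2}=4r^2/\widehat m_{n,t}^{\,2}$. Substituting this into the Step~6 covariance $T\overline\Delta_{n,t}^{\,2}/(2\rho_{\varepsilon,\delta})\,I_d$ and multiplying by $n$ gives the displayed conditional law of $\sqrt n\,\xi_{n,t}$. We can also write $\sqrt n\,\xi_{n,t}=s_{n,t}\,Z_{n,t}$, where
\[
s_{n,t}=\sqrt{2T/\rho_{\varepsilon,\delta}}\;r\,\sqrt n/\widehat m_{n,t}
\]
is $(D_n,Y_{<t})$-measurable and $Z_{n,t}\sim\mathcal N(0,I_d)$ is independent of $(D_n,Y_{<t})$. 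This holds because the noise is drawn fresh after the history is fixed. Note that $\widehat m_{n,t}\in[1,n]$ by the Sensitivity remark, so $s_{n,t}$ is well defined.

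\textbf{Step 2: the three regimes.} Since $Z_{n,t}$ is independent of $s_{n,t}$ and has a fixed law, each regime reduces to the behaviour of the scalar $s_{n,t}$.
\begin{itemize}
\item If $\widehat m_{n,t}/\sqrt n\to\infty$ in probability, then $s_{n,t}\to0$ in probability. Since $Z_{n,t}=O_P(1)$, the product tends to $0$ in probability.
\item If $\sqrt n/\widehat m_{n,t}\to\ell_t$ in probability, then $s_{n,t}\to s_t:=\sqrt{2T/\rho_{\varepsilon,\delta}}\,r\,\ell_t$. The joint characteristic function is
\[
\mathbb E\bigl[e^{iu^\top s_{n,t}Z_{n,t}}\bigr]=\mathbb E\bigl[e^{-s_{n,t}^2\|u\|^2/2}\bigr]\to e^{-s_t^2\|u\|^2/2}
\]
by bounded convergence. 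This identifies the limit $\mathcal N(0,s_t^2I_d)$, whose variance is the stated $2Tr^2\ell_t^2/\rho_{\varepsilon,\delta}$.
\item If $\widehat m_{n,t}/\sqrt n\to0$ in probability, then $s_{n,t}\to\infty$ in probability. Hence, for any $K$,
\[
\mathbb P\bigl(\|\sqrt n\,\xi_{n,t}\|>K\bigr)\ge \mathbb P\bigl(\|Z_{n,t}\|>K/M\bigr)\,\mathbb P(s_{n,t}>M)-o(1),
\]
using independence. First let $n\to\infty$, then $M\to\infty$: the bound tends to $1$. So the sequence is not tight.
\end{itemize}

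\textbf{Step 3: the linear regime.} Suppose $\widehat m_{n,t}/n\to c_t>0$ in probability. Then $\widehat m_{n,t}/\sqrt n\to\infty$, so the first case already gives $\sqrt n\,\xi_{n,t}\to0$ in probability. For the finer scaling, $n\xi_{n,t}=\sqrt n\,s_{n,t}Z_{n,t}$ with
\[
\sqrt n\,s_{n,t}=\sqrt{2T/\rho_{\varepsilon,\delta}}\;r\,n/\widehat m_{n,t}\;\to\;\sqrt{2T/\rho_{\varepsilon,\delta}}\;r/c_t
\]
in probability. The characteristic-function argument of Step~2 then yields the $\mathcal N\!\left(0,2Tr^2/(\rho_{\varepsilon,\delta}c_t^2)I_d\right)$ limit.

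\textbf{Main obstacle.} The only delicate point is rigor in mixing conditional and unconditional convergence, because the random scale $\widehat m_{n,t}$ depends on the same data as everything else. This is handled by the representation $\sqrt n\,\xi_{n,t}=s_{n,t}Z_{n,t}$ with $Z_{n,t}$ independent of $s_{n,t}$. That independence is what justifies factoring the characteristic function and the probabilities in the non-tightness bound.
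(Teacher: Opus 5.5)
Your proposal is correct and follows essentially the paper's route. Both arguments use the conditional Gaussian law and a characteristic-function limit via bounded convergence and L\'evy's theorem for the convergent regimes, including the linear regime. Both then use a direct probability bound showing $\|\sqrt n\,\xi_{n,t}\|\to\infty$ in probability for non-tightness. The only difference is cosmetic: you encode the conditioning through $\sqrt n\,\xi_{n,t}=s_{n,t}Z_{n,t}$ with $Z_{n,t}$ independent of $(D_n,Y_{<t})$, whereas the paper works directly with the conditional characteristic function and the conditional $\chi_d^2$ distribution function.
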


\begin{remark}The effective count $\widehat m_{n,t}=2r/\overline\Delta_{n,t}$ summarizes the sensitivity scale of BRACE. Theorem~\ref{thm:effective-count-effect} determines exactly when privacy noise contributes to the first-order trajectory limit: $\widehat m_{n,t}\asymp\sqrt n$ is the transition between asymptotically negligible, first-order, and non-tight privacy noise at the root-$n$ scale.
\end{remark}

\begin{corollary}[Selected active-set BRACE]
\label{cor:selected-brace-trajectory-limit}
Suppose $p_t>0$ for every $t\le T$ and $P$ has a continuous Lebesgue density in a neighborhood of each sphere
\[
\{x:\|x-\bar z_{t-1}\|=r\}.
\]
With $\kappa,\gamma_1,\ldots,\gamma_T,\varepsilon>0$ and $\delta\in(0,1)$ fixed, Algorithm~\ref{alg:brace-summary} satisfies Assumption~\ref{ass:brace-clt-policy} with $C_t=0$ and $\xi_t=0$ for every $t$. Consequently,
\[
\sqrt n\,(Y_{n,1:T}-z_{n,1:T})
\xrightarrow{P}0.
\]
\end{corollary}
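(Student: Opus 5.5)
The plan is to verify the three conditions of Assumption~\ref{ass:brace-clt-policy} with $C_t=0$ and $\xi_t=0$, and then read off the conclusion from Theorem~\ref{thm:general-brace-first-order-limit}. I will use a single induction over $t=1,\ldots,T$. The inductive claim is that, with probability tending to one,
\[
\|Y_{t-1}-z_{t-1}\|=O_P(1/n),\qquad H_t=0,\qquad Q_t=a_t,\qquad \|R_t\|=O_P(1/n),
\]
together with $\widehat m_{n,t}/n$ bounded away from zero in probability. The base case is $Y_0=z_0=q$, so $R_1=0$.

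First I control the clean trajectory. By the LLN and the continuity of the density near the spheres $\{\|x-\bar z_{t-1}\|=r\}$, the map $z\mapsto \mathbb E[X\mathbf 1\{\|X-z\|\le r\}]$ is continuous near $\bar z_{t-1}$, so inductively $z_{n,t}\to\bar z_t$ and $|I_t|/n\to p_t>0$ in probability. Next I take a state $Y_{t-1}$ with $\|Y_{t-1}-z_{t-1}\|=\eta$. The symmetric difference $I_t\triangle J_t$ consists of sample points in a shell of width $O(\eta)$ around the sphere. Since the density is bounded near the sphere, a uniform (VC-type) bound over balls gives $|I_t\triangle J_t|=O_P(n\eta+\sqrt n\cdot\text{small})$; when $\eta=O_P(1/n)$ this count is $O_P(1)$. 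Every point of $I_t\cup J_t$ lies within $2r$ of $a_t$, and $R_t=z_t-a_t$, so
\[
\|R_t\|\le 2r|I_t\triangle J_t|/|I_t|+O(\eta)=O_P(1/n).
\]
In particular $\|R_t\|^2\le\kappa Tr^2$ eventually, so $H_t=0$ and $a_t+H_tR_t=a_t$. Because $\|a_t-Y_{t-1}\|\le r$, the projection in Step~4 is inactive. This gives condition~(ii) exactly, and it also gives $\sqrt n(z_t-a_t-H_tR_t)=\sqrt n R_t\to0$, which is condition~(i) with $C_t=0$.

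For condition~(iii), I use Theorem~\ref{thm:effective-count-effect}. It suffices to show $\widehat m_{n,t}/n\ge c>0$ with probability tending to one, equivalently $\mathsf S_{\gamma_t}[L_t](D)=O_P(1/n)$. I split the supremum defining the smooth envelope according to $k=d_H(D,B)$. For $k\ge c'n$ I use the crude bound $L_t\le2r$, which contributes at most $2re^{-\gamma_t c'n}=o(1/n)$. For $k<c'n$, the same shell-counting argument applied to $B$ and all its neighbours with $y_{<t}$ fixed should show the following. No activation branch is triggered, because a single replaced record moves any corrected quantity by $O((k+1)/n)\ll\sqrt{\kappa T}\,r$. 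Hence the simple formula applies, and $L_t(B)=2r/\max\{1,|I(B,y_{t-1})|\}\le 2r/(np_t/2-k)=O(1/n)$. This yields $\overline\Delta_{n,t}=O_P(1/n)$ and therefore $\xi_{n,t}=O_P(1/n)$ (since $d$ and $T$ are fixed). So $\sqrt n\,\xi_{n,t}\to0$, giving $\xi_t=0$ jointly over $t$. Moreover, $\|z_t-q\|\le tr$ while $\|Q_t+\xi_t-z_t\|=O_P(1/n)$, so the output projection in Step~6 is inactive with probability tending to one. This closes the induction with $\|Y_t-z_t\|=O_P(1/n)$.

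Finally, Theorem~\ref{thm:general-brace-first-order-limit} gives $\sqrt n(Y_{n,1:T}-z_{n,1:T})\Rightarrow 0$. Convergence in law to a constant is convergence in probability, which is the stated conclusion. I expect the main obstacle to be the smooth-sensitivity step. The requirement $H_s(B';y_{<s})=0$ must hold uniformly over all neighbours $B'$ of banks $B$ up to distance $\asymp n$, with the history fixed at the realized $y_{<t}$ and with adversarial replacement points. Making the claim ``one replacement moves $R_s$ by only $O(1/|I|)$, so activation cannot be triggered'' rigorous in the recursive \textsc{CorrectedCountSearch} is where the care goes. My first attempt would be a uniform bound $\|R_s(B')\|\le 2r\,(|I_s\triangle J_s|(D)+k+1)/(np_s/2)$, propagated through $s\le t$.
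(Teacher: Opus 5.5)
Your overall architecture matches the paper's. You rule out activation with probability tending to one, both for $D_n$ and, uniformly, for banks near $D_n$ and their neighbours at the realized history. You then split the smooth envelope into near banks, where the count formula gives $L_t=O(1/n)$, and far banks, where $L_t\le 2r$ is absorbed by $e^{-\gamma_t k}$. This gives $\overline\Delta_{n,t}=O_P(1/n)$ and $\xi_{n,t}=O_P(1/n)$, from which (i)--(iii) follow with zero limits.

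The flaw is the rate you carry through the induction. The step ``$|I_t\triangle J_t|=O_P(n\eta+\sqrt n\cdot\text{small})$, hence $O_P(1)$ when $\eta=O_P(1/n)$'' does not follow. Because $Y_{t-1}$ and $z_{t-1}$ depend on the data, you need a bound uniform in the centres. Stochastic equicontinuity gives only $\sup_{\|y-u\|\le\eta_n}|\mathbb G_n(|g_y-g_u|)|=o_P(1)$, with $g_y=\mathbf 1\{\|\cdot-y\|\le r\}$. So the fluctuation part of the count is $o_P(\sqrt n)$, not $O_P(1)$. No uniform argument can do better: where the density is positive, the largest number of sample points in a shell of width $K/n$, over centres near $\bar z_{t-1}$, grows like a scan statistic. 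This is already of order $\log n/\log\log n$ for $d=1$.

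Hence neither $\|R_t\|=O_P(1/n)$ nor the hypothesis $\|Y_t-z_t\|=O_P(1/n)$ is established, and the induction as written does not close. The repair is to run it with $\|Y_{t-1}-z_{t-1}\|=o_P(n^{-1/2})$. The count is then at most $Cn\eta+o_P(\sqrt n)=o_P(\sqrt n)$, so $\|R_t\|=o_P(n^{-1/2})$. That is exactly what (i) needs, while $H_t=0$ and the inactivity of both projections need only $R_t=o_P(1)$. This is what the paper does: stochastic equicontinuity and local Lipschitz continuity of the population map $\Phi_P$ give $\sqrt n(a_{n,t}-z_{n,t})=\sqrt n\{\Phi_P(Y_{n,t-1})-\Phi_P(z_{n,t-1})\}+o_P(1)=o_P(1)$.

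The same ingredient is missing from your smooth-sensitivity step. The displayed bound $\|R_s(B')\|\le 2r(|I_s\triangle J_s|(D)+k+1)/(np_s/2)$ is not correct as stated. The $k+1$ replaced records also move the clean centre $z_{s-1}(B')$ by $O((k+1)/n)$. This shifts the clean gate and changes the membership of order $k$ further records of $D$, plus empirical fluctuations, and the effect compounds over $s$.

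The paper handles this with a robust map bound, $\sup_{d_H(D_n,B)\le k}\sup_{y\in U}\|\Phi_B(y)-\Phi_P(y)\|=O_P(n^{-1/2}+k/n)$. This holds because a replaced record can enter a gate only inside a bounded region. Combined with $\|\Phi_P(y)-\Phi_P(u)\|\le C\|y-u\|$ on the neighbourhoods $U_s$, induction over $s\le T$ bounds $\|z_s(B)-\bar z_s\|$, and hence $\|R_s(B;Y_{n,<s})\|$, by $O_P(n^{-1/2}+k/n)$ uniformly.

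The paper takes $k_n=\lceil 2\log n/\min_s\gamma_s\rceil$. Then $k_n/n=o(n^{-1/2})$, any fixed $\kappa>0$ excludes activation, and $e^{-\gamma_t k_n}\le n^{-2}$ disposes of the far banks. Your linear cutoff $k<c'n$ can also be made to work. You would need $c'$ small relative to $\sqrt{\kappa T}\,r$, the Lipschitz constants, and $p_t$, and small enough that perturbed centres stay in the neighbourhoods where the density and count bounds hold.
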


\begin{remark}[Dimension dependence]
For the Gaussian implementation,
\(
\mathbb E[\|\xi_{n,t}\|^2\mid D_n,Y_{<t}]
=
dT\overline\Delta_{n,t}^{\,2}/(2\rho_{\varepsilon,\delta}),
\)
so the unprojected noise energy scales with $d$. However, the final projection gives
\(
\max_{t\le T}\|Y_{n,t}-z_{n,t}\|^2\le(2T+1)^2r^2,
\)
and hence the minimax bounds in Theorem~\ref{thm:minimax-optimality} are dimension independent.
\end{remark}

\section{Experiments}
\label{sec:experiments}

\subsection{Synthetic Data: Boundary-Induced Mode Switching}
\label{subsec:synthetic-boundary}

We first illustrate the boundary instability in Proposition~\ref{prop:boundary-non-lipschitz}. We generate $n=256$ memories in $\mathbb R^2$ from a three-component Gaussian mixture with two dominant modes centered at $(\pm1,0)^\top$ and a broader central component connecting them. We compare noiseless DAM, Naive DAM, and BRACE with $r=0.8$ ($\beta=3.125$), $T=5$, and $(\varepsilon,\delta)=(128,10^{-5})$. The same Gaussian innovations are used for the two noisy methods. This experiment aims to illustrate the boundary instability phenomenon in LSR-DP and demonstrates the necessity for a boundary correction DP method which motivates our BRACE. 

\textbf{Main results.} Figure~\ref{fig:synthetic-boundary} shows that the first perturbation has norm only $0.01443$, but at update~2 it changes 11 gate memberships: six memories leave the clean gate and five enter the released gate, although the gate size changes only from $67$ to $66$. The resulting correction has norm $\|R_2\|=0.12510$ and BRACE activates from update~2 onward. Naive DAM subsequently moves toward the opposite mode: its per-coordinate squared error increases from $0.08445$ at update~3 to $0.42699$ at update~4 and $1.00503$ at update~5, whereas the final BRACE error is $0.01222$. This behavior agrees with Proposition~\ref{prop:boundary-non-lipschitz}: a small perturbation of the state can change the active set and produce a much larger change in the next retrieval update. BRACE corrects this boundary-induced displacement before the next privacy perturbation is added.

\begin{figure}[!t]
    \centering
    \includegraphics[width=\columnwidth]{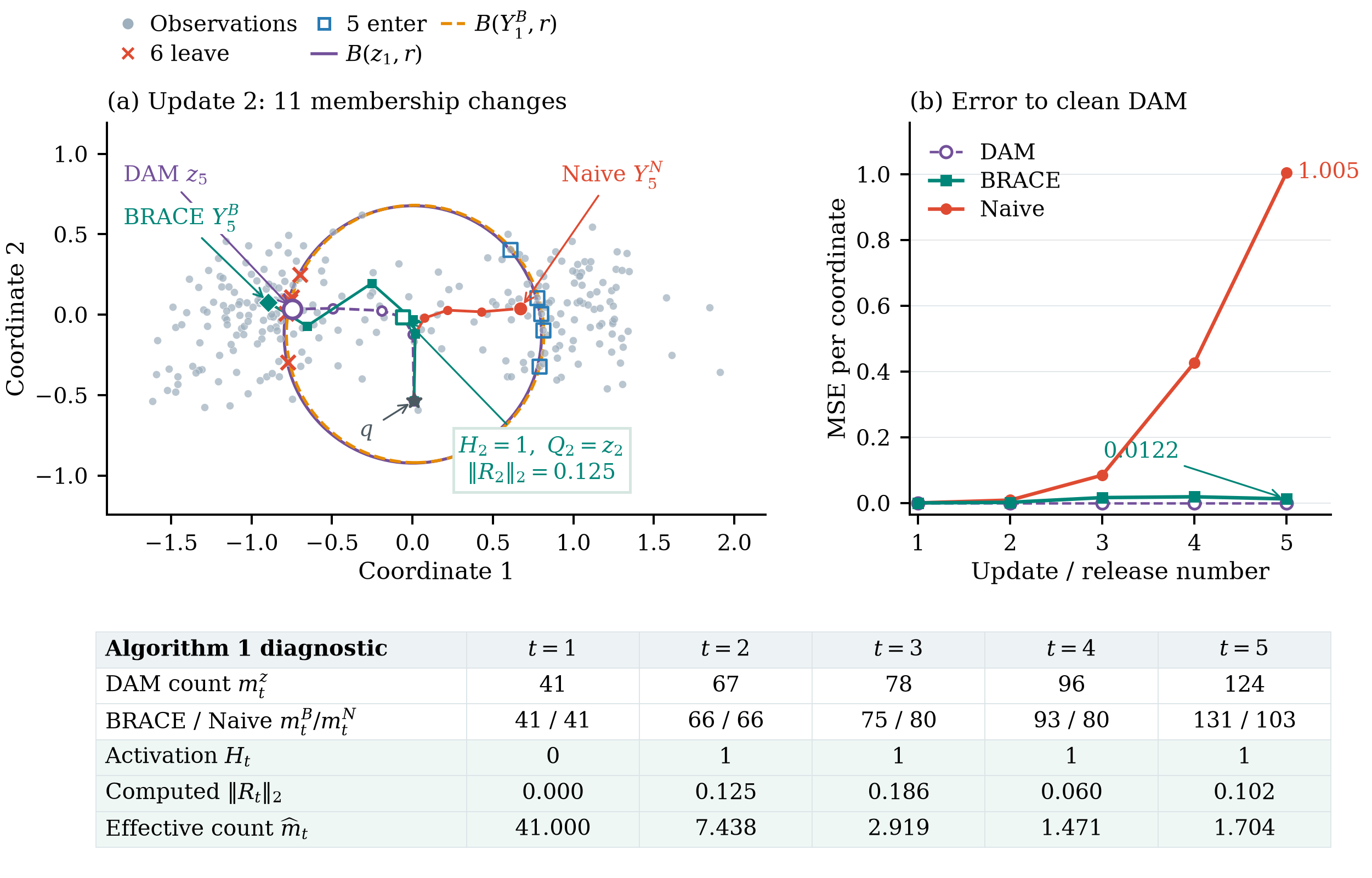}
    \caption{Synthetic boundary instability. Left: retrieval trajectories of noiseless DAM, Naive DAM, and BRACE. Right: per-coordinate squared error relative to noiseless DAM.}
    \label{fig:synthetic-boundary}
\end{figure}

\subsection{MNIST retrieval with strong query corruption}
\label{subsec:mnist-three-plots}
\begin{figure}[!t]
    \centering
    \includegraphics[width=\columnwidth]{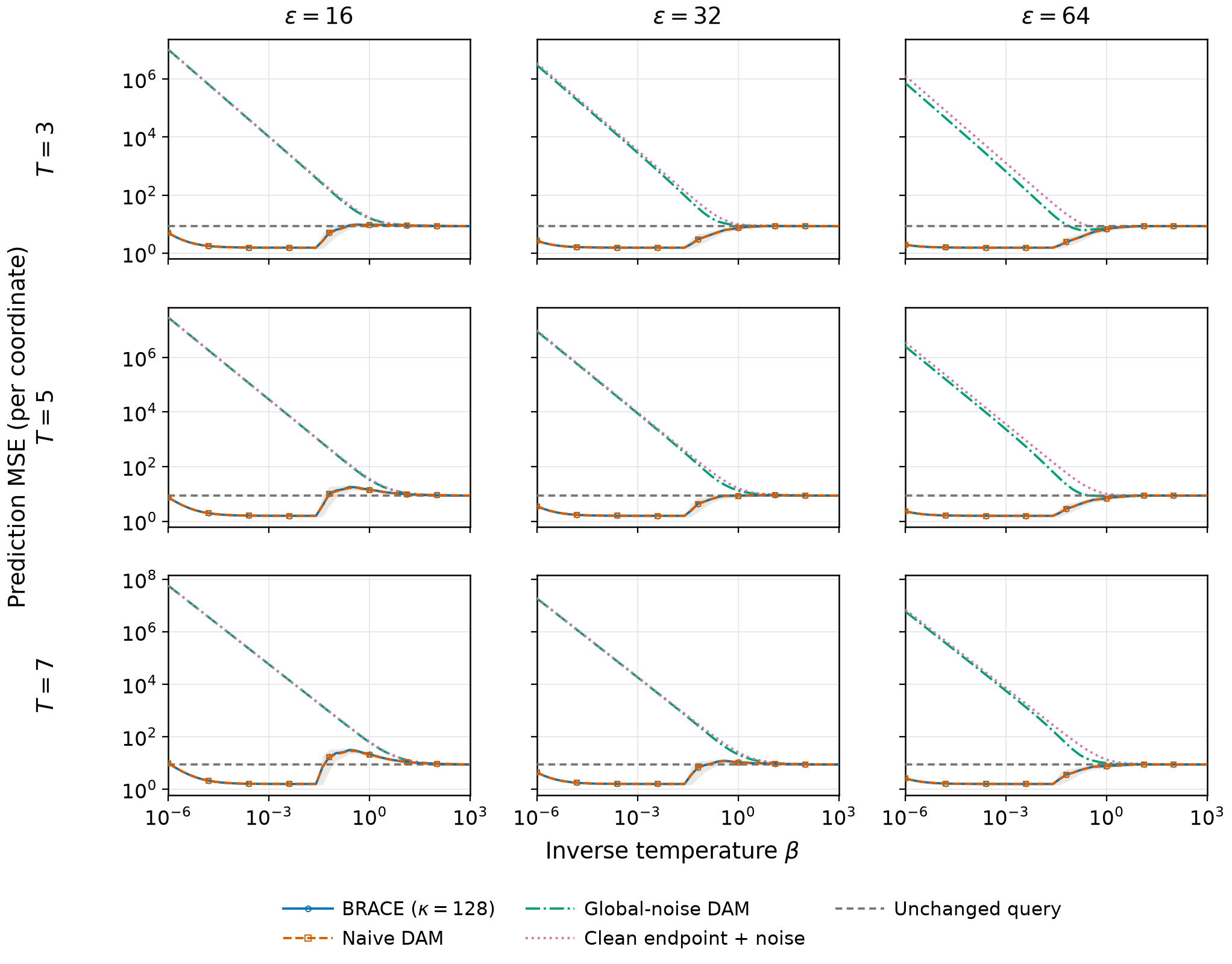}
    \caption{Prediction MSE against uncorrupted query features. Crosses denote unresolved BRACE sensitivity calculations.}
    \label{fig:mnist-prediction-grid}
\end{figure}
\begin{figure}[!t]
    \centering
    \includegraphics[width=\columnwidth]{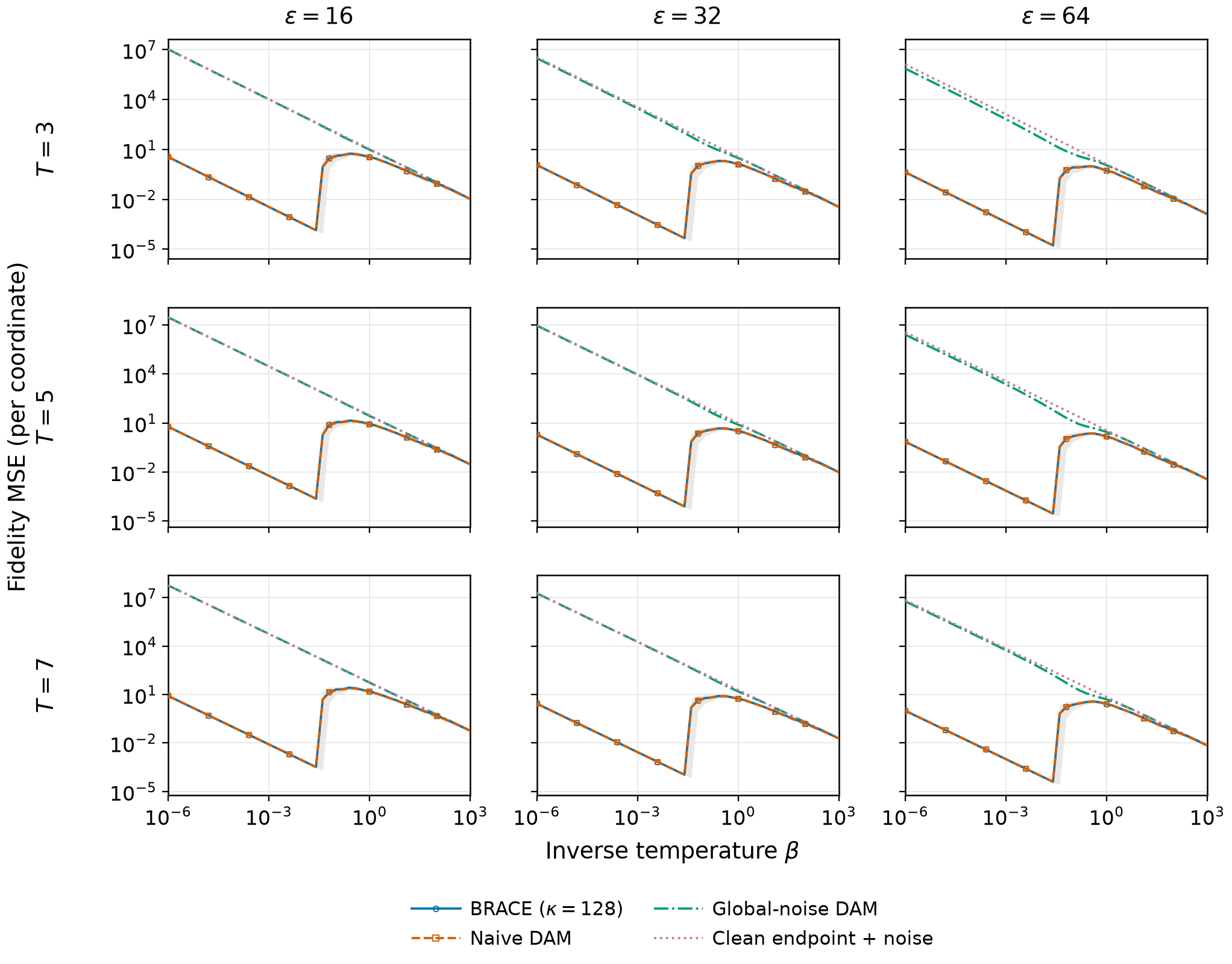}
    \caption{Endpoint fidelity to noiseless DAM. Crosses denote unresolved BRACE sensitivity calculations.}
    \label{fig:mnist-fidelity-grid}
\end{figure}
We use two PCA coordinates of a frozen ten-dimensional VAE trained on 50,000 MNIST images, with a bank of $n=1000$ images and $Q=256$ evaluation queries. Each method starts from the strongly corrupted feature
\[
q_i=x_i+3u_i,\qquad u_i\sim\mathcal N(0,I_2),
\]
with the same corruption across methods. We average 32 paired noise repetitions and vary $\beta\in[10^{-6},1000]$, $T\in\{3,5,7\}$, and $\varepsilon\in\{16,32,64\}$, with $\delta=10^{-5}$ and $r=\sqrt{2/\beta}$. We report prediction MSE relative to the uncorrupted feature $x_i$ and fidelity MSE relative to the noiseless DAM endpoint $z_{i,T}$. Shading gives pointwise 95\% query-bootstrap intervals.

We compare BRACE with $\kappa=128$, Naive DAM, global-noise DAM, and clean endpoint plus Gaussian noise. BRACE uses the smooth sensitivity certificate $\overline\Delta_t$, while Naive DAM uses the direct count scale $2r/\max\{1,m_t\}$ and global-noise DAM uses $2r$. The iterative methods add Gaussian noise with standard deviation $\Delta_t\sqrt{T/(2\rho)}$, where
\(
\rho=
\left(
\sqrt{\log(1/\delta)+\varepsilon}
-
\sqrt{\log(1/\delta)}
\right)^2.
\)
The clean-endpoint baseline releases only $z_T$ with one Gaussian perturbation calibrated to the bound $2Tr$.

\paragraph{Main result.}
Here the best setting refers to the evaluated $\beta$ giving the smallest prediction MSE. BRACE achieves its minimum prediction MSE $1.582963$ at $\beta=0.025119$, $T=3$, and $\varepsilon=64$, compared with $8.759678$ for the corrupted input and $1.582947$ for noiseless DAM. Thus BRACE reduces the initial prediction error by $81.9\%$ and nearly matches noiseless retrieval. Naive DAM attains the same measured minimum, whereas the minima of global-noise DAM and clean-endpoint noise are $6.327462$ and $7.870764$, respectively.
Since $r=\sqrt{2/\beta}$, increasing $\beta$ reduces the number of memories in each gate. 

Interestingly, we also observed that for small $\beta$, the gate contains most or all memories, so retrieval approaches the bank mean; decreasing $\beta$ further mainly increases privacy noise. At intermediate $\beta$, smaller active counts increase the sensitivity scale and can produce the nonmonotone fidelity behavior. For large $\beta$, gates are often empty, so both retrieval movement and privacy noise vanish and the output remains close to the corrupted query.

\paragraph{Handwritten reconstructions.}
Figure~\ref{fig:mnist-best-brace-digits} provides a visual illustration at the best observed BRACE setting. BRACE closely follows noiseless DAM, while the reconstructions also show that low feature-space MSE does not necessarily imply recovery of the original individual digit.

\begin{figure}[!t]
    \centering
    \includegraphics[width=\columnwidth]{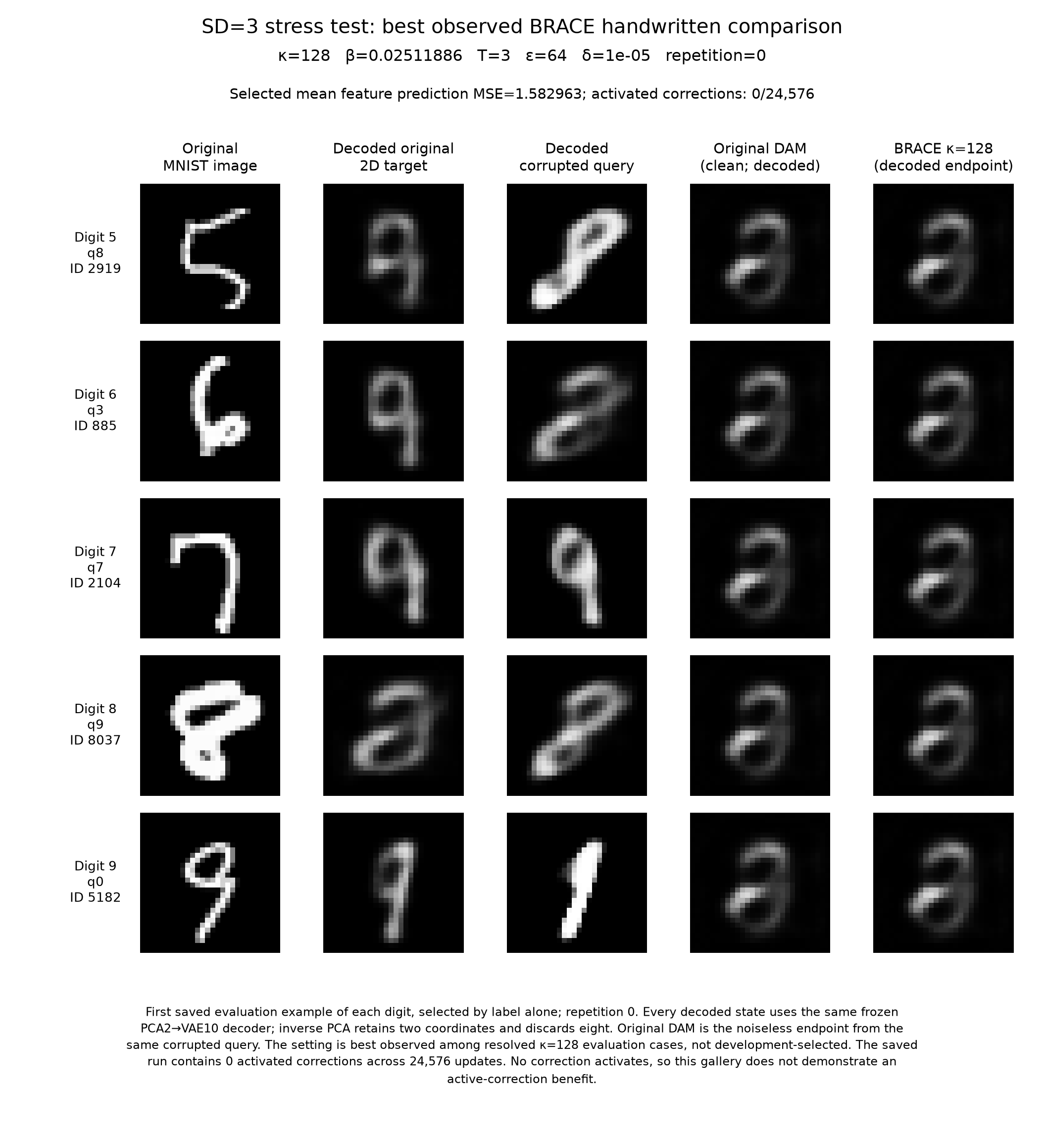}
    \caption{Handwritten examples at the best observed BRACE setting
    ($\kappa=128$, $\beta=0.025119$, $T=3$, $\varepsilon=64$).}
    \label{fig:mnist-best-brace-digits}
\end{figure}

\section{Conclusion and Limitation}
We developed a comprehensive framework for privacy-preserving retrieval of LSR-DAM, establishing formal privacy guarantees, utility bounds, and minimax optimality results. By tackling the boundary instability, one fundamental challenge in this problem, we propose the BRACE method. Our theory characterizes the effects of local retrieval, sensitivity smoothing, and privacy noise, and provides a principled understanding of the resulting privacy-utility trade-off. Experiments demonstrate that the proposed method achieves consistently better prediction performance than competing approaches, while supporting the theoretical findings and showing the practical benefits of local retrieval. Moreover, as with any retrieval-based private mechanism, the practical performance depends on the underlying data geometry and the choice of retrieval scale. In regimes where local neighborhoods are intrinsically sparse, the benefits of retrieval may naturally diminish. The sensitivity estimation also introduces additional computation, reflecting the cost of obtaining tighter data-dependent privacy guarantees. These considerations primarily concern the operating regime and implementation efficiency rather than the validity of the theoretical guarantees. Future work may further improve computational efficiency and develop more adaptive choices of the retrieval scale.

\subsection*{AI use statement}
In this work, we used generative AI tools to assist with proof writing, code debugging, literature search and language editing. We have not used generative AI tools for brainstorming research ideas, problem formulation, algorithm design, finding theoretical insights, and experimental design. The rest
of the required disclosure tasks are not applicable to
this work. We have reviewed all AI-assisted work. For example, LLM-generated code was verified and tested for correctness by two authors; all LLM-assisted proofs were checked line by line by the authors. We take responsibility for the final content of this work, including text, claims, or artifacts produced with the aid of generative AI.

\def\BRACESkipAppendixReferences{1}

\clearpage
\appendix

This supplementary contains proofs of all theoretical results, experiment implement details and computational complexity.

\section{Proof of Proposition~\ref{prop:boundary-non-lipschitz}}
\label{app:geometry}

\begin{proof}
By definition,
\[
\|z_h-z\|=h,
\qquad
\|X_j-z_h\|=r+h.
\]
For $i\ne j$, $\|X_i-z\|\ne r$. By continuity, for all small $h>0$,
\[
I(D,z_h)=I(D,z)\setminus\{j\}.
\]
If $I(D,z)=\{j\}$, then
\[
\Phi_D(z_h)-\Phi_D(z)=z_h-X_j\longrightarrow z-X_j.
\]
For $m=|I(D,z)|\ge2$,
\[
\begin{aligned}
\Phi_D(z_h)-\Phi_D(z)
&=\frac{\sum_{i\in I(D,z)\setminus\{j\}}X_i}{m-1}
  -\frac{X_j+\sum_{i\in I(D,z)\setminus\{j\}}X_i}{m}\\
&=\frac1m\left(
\frac{\sum_{i\in I(D,z)\setminus\{j\}}X_i}{m-1}-X_j
\right).
\end{aligned}
\]
For a nonzero limit,
\[
\frac{\|\Phi_D(z_h)-\Phi_D(z)\|}{\|z_h-z\|}
=\frac{\|\Phi_D(z_h)-\Phi_D(z)\|}{h}\longrightarrow\infty.
\]
\end{proof}

\section{Proofs for the BRACE construction}
\label{app:post-correction-sensitivity}
\label{app:correction-proofs}

\subsection{Movement and fixed-center sensitivity}
\label{app:direct-sensitivity}

\begin{proof}
By convexity, $\Phi_D(y)\in\overline B_2(y,r)$, also for an empty gate. The triangle inequality gives
\[
\|\Phi_D(y)-y\|\le r,
\qquad
\|z_t-z_{t-1}\|\le r,
\qquad
\|z_t-q\|\le tr.
\]

Fix $D'=D^{j\leftarrow x}$ and $m=|I(D,y)|$. For $m=0$, $\|\Phi_{D'}(y)-\Phi_D(y)\|\le r$. For $m\ge1$, an inactive-to-inactive replacement leaves $\Phi_D(y)$ unchanged. For an active-to-active replacement,
\[
\Phi_{D'}(y)-\Phi_D(y)=\frac{x-X_j}{m},
\qquad
\|x-X_j\|\le2r.
\]
For removal with $m\ge2$,
\[
\Phi_D(y)-\Phi_{D'}(y)
=\frac1m\left(
X_j-\frac{\sum_{i\in I(D,y)\setminus\{j\}}X_i}{m-1}
\right),
\]
with norm at most $2r/m$. For $m=1$, removal changes the output by at most $r$. For entry,
\[
\Phi_{D'}(y)-\Phi_D(y)
=\frac{x-\Phi_D(y)}{m+1},
\]
with norm at most $2r/(m+1)\le2r/m$. Thus
\begin{equation}
\operatorname{LS}_{\Phi_{\cdot}(y)}(D)
\le\frac{2r}{\max\{1,|I(D,y)|\}}\le2r.
\label{eq:app-fixed-center-bound}
\end{equation}
\end{proof}

\subsection{Selected correction and activation bound}
\label{app:correction-residual}

\begin{proof}
For $I_t\ne\varnothing$,
\[
\begin{aligned}
&\sum_{i\in I_t\setminus J_t}(X_i-a_t)
 -\sum_{i\in J_t\setminus I_t}(X_i-a_t)\\
&\qquad=\sum_{i\in I_t}(X_i-a_t)
 -\sum_{i\in J_t}(X_i-a_t)
 =|I_t|(z_t-a_t).
\end{aligned}
\]
Here $\sum_{i\in J_t}(X_i-a_t)=0$, also for $J_t=\varnothing$. For $I_t=\varnothing$, $z_t=z_{t-1}$. Thus $R_t=z_t-a_t$ in both cases.

For $H_t=0$, $a_t\in\overline B_2(Y_{t-1},r)$ and $Q_t=a_t$, so
\[
\|Q_t-z_t\|^2=\|R_t\|^2\le\kappa Tr^2.
\]
For $H_t=1$, $a_t+H_tR_t=z_t$, so
\[
\begin{aligned}
\|Q_t-z_t\|
&=(\|z_t-Y_{t-1}\|-r)_+\\
&\le\bigl(\|z_t-z_{t-1}\|
       +\|z_{t-1}-Y_{t-1}\|-r\bigr)_+\\
&\le\|z_{t-1}-Y_{t-1}\|.
\end{aligned}
\]
Thus
\[
\|Q_t-z_t\|^2
\le\max\{\kappa Tr^2,\|Y_{t-1}-z_{t-1}\|^2\}.
\]
\end{proof}

\subsection{Replacement branches, localization, and raw sensitivity}
\label{app:branch-search}

\begin{proof}
Fix $B$, $j\in[n]$, and $y_{<t}$. Set $B'_j=x$ and $B'_i=B_i$ for $i\ne j$. Primes denote BRACE on $B'=B^{j\leftarrow x}$ at the fixed history.

For $s\le t$, enumerate $I'_s,J'_s\subseteq[n]$, $H'_s\in\{0,1\}$, and both projection cases. Set $z'_0=q$ and require
\[
\begin{aligned}
i\in I'_s&\ \Longleftrightarrow\ \|B'_i-z'_{s-1}\|^2\le r^2,\\
i\in J'_s&\ \Longleftrightarrow\ \|B'_i-y_{s-1}\|^2\le r^2,\\
z'_s&=\begin{cases}
|I'_s|^{-1}\sum_{i\in I'_s}B'_i,&I'_s\ne\varnothing,\\
z'_{s-1},&I'_s=\varnothing,
\end{cases}\\
a'_s&=\begin{cases}
|J'_s|^{-1}\sum_{i\in J'_s}B'_i,&J'_s\ne\varnothing,\\
y_{s-1},&J'_s=\varnothing,
\end{cases}\\
R'_s&=z'_s-a'_s,\qquad
H'_s=\mathbf1\{\|R'_s\|^2>\kappa Tr^2\}.
\end{aligned}
\]
Inside the projection ball, require
\[
\|a'_s+H'_sR'_s-y_{s-1}\|^2\le r^2,
\qquad
Q'_s=a'_s+H'_sR'_s.
\]
Outside, use $\eta_s>r$ and require
\[
\begin{aligned}
\eta_s^2&=\|a'_s+H'_sR'_s-y_{s-1}\|^2,\\
\eta_s(Q'_s-y_{s-1})
&=r(a'_s+H'_sR'_s-y_{s-1}).
\end{aligned}
\]
These polynomial constraints include all records at every step.

Every replacement satisfies a branch. Conversely, the constraints and $z'_0=q$ determine $(z'_s,a'_s,R'_s,H'_s,Q'_s)$ by induction. Hence, whenever \textsc{CorrectedCountSearch} is used, the branch supremum equals the complete-query local sensitivity.

If $x$ is in a clean gate at $s\le t$,
\[
\|x-q\|
\le\|x-z'_{s-1}\|+\|z'_{s-1}-q\|
\le sr\le tr.
\]
A released gate requires $x\in\overline B_2(y_{s-1},r)$. Outside
\[
\overline B_2(q,tr)
\ \cup\ \bigcup_{s=1}^t\overline B_2(y_{s-1},r),
\]
all gates exclude $x$. For fixed $j$, these replacements give the same averages and recursion. One outside representative suffices.

Both queries lie in $\overline B_2(y_{t-1},r)$, so
\[
\|Q_t(B;y_{<t})-Q_t(B';y_{<t})\|\le2r.
\]
Under the nonactivation condition in Step~5, use \eqref{eq:app-fixed-center-bound}; otherwise, use the branch supremum. Thus
\begin{equation}
\operatorname{LS}_{Q_t(\cdot;y_{<t})}(B)
\le L_t(B;y_{<t})\le2r.
\label{eq:app-raw-certificate-bound}
\end{equation}
\end{proof}

\paragraph{Computational complexity.}
Let
\[
s_1<\cdots<s_{K_t}\le t,
\qquad
s_{K_t+1}=t+1,
\]
be the activated correction times. Before $s_1$, no correction branch search is needed. At the $k$th activation, the cumulative neighboring-query discrepancy is restricted to
\[
\overline B_2(0,2rk).
\]
Let $G_k$ be the number of feasible gate cells retained at activation $k$. The number of branch-propagation operations is bounded by
\[
\sum_{k=1}^{K_t}
(s_{k+1}-s_k)
\prod_{\ell=1}^kG_\ell
\le
(t-s_1+1)\prod_{k=1}^{K_t}G_k.
\]
If $K_t=0$, no branch search is required. Thus the branch complexity depends on the number of activated corrections, rather than branching over the full bank at every iteration.

\subsection{Smooth domination and measurability}
\label{app:smooth-certificate}

\begin{proof}
By \eqref{eq:app-raw-certificate-bound} and the envelope term $B=D$,
\[
\operatorname{LS}_{Q_t(\cdot;y_{<t})}(D)
\le L_t(D;y_{<t})
\le\mathsf S_{\gamma_t}[L_t](D).
\]
Every envelope term is at most $2r$. For $D\sim D'$, the Hamming triangle inequality gives
\[
\begin{aligned}
e^{-\gamma_t d_H(D,B)}L_t(B;y_{<t})
&\le e^{\gamma_t}
 e^{-\gamma_t d_H(D',B)}L_t(B;y_{<t}).
\end{aligned}
\]
Take suprema and include the floor $2r/n$:
\begin{equation}
\begin{aligned}
\frac{2r}{n}&\le\overline\Delta_t(D;y_{<t})\le2r,\\
\overline\Delta_t(D;y_{<t})
&\le e^{\gamma_t}\overline\Delta_t(D';y_{<t}),\\
1&\le\widehat m_t\le n.
\end{aligned}
\label{eq:app-smooth-certificate-bound}
\end{equation}
For $c\ge0$,
\[
a\le e^{\gamma_t}b
\quad\Longrightarrow\quad
\max\{c,a\}\le e^{\gamma_t}\max\{c,b\}.
\]

The branch graphs form a finite semialgebraic union. By quantifier elimination \citep{basu2006algorithms}, $Q_t$ and the all-neighbor nonactivation condition are semialgebraic.

For $u\ge0$, local sensitivity exceeds $u$ exactly when a replacement satisfies
\[
\|Q_t(B;y_{<t})-Q_t(B';y_{<t})\|^2>u^2.
\]
Quantifier elimination makes this relation semialgebraic. The count branch is also semialgebraic, so $L_t$ is Borel measurable.

Also,
\begin{equation}
\mathsf S_{\gamma_t}[L_t](D)
=\max_{0\le k\le n}
 e^{-\gamma_t k}\sup_{d_H(D,B)\le k}L_t(B;y_{<t}).
\label{eq:app-finite-radius-envelope}
\end{equation}
For $d_H(D,B)=j\le k$, $e^{-\gamma_t k}L_t(B)\le e^{-\gamma_t j}L_t(B)$; take $k=j$ for the reverse bound. Each supremum has semialgebraic strict superlevel sets by quantifier elimination. Finite maxima and fixed weights preserve Borel measurability.
For real-algebraic inputs, quantifier elimination and bisection on $[0,4r^2]$ give certified bounds for the required suprema.
\end{proof}

\section{Proof of Theorem~\ref{thm:brace-privacy}}
\label{app:privacy-proof}

\begin{proof}
For fixed $y_{<t}$ and Borel $A$, define the release kernel
\[
\begin{aligned}
K_t^D(A\mid y_{<t})
=\mathbb P\bigl\{
\Pi_{\overline B_2(q,(t+1)r)}
\bigl(Q_t(D;y_{<t})+\xi_t\bigr)\in A
\bigr\}.
\end{aligned}
\]
Apply conditional DP to the preimage of $A$ under the public projection.

Fix ordered neighbors $D,D'$. Set
\[
\mu_t(\cdot\mid y_{<t})
=\tfrac12\bigl(K_t^D(\cdot\mid y_{<t})
               +K_t^{D'}(\cdot\mid y_{<t})\bigr),
\]
The standard Borel spaces admit jointly measurable densities $k_t^D,k_t^{D'}$ relative to $\mu_t$. Conditional DP gives
\[
\int(k_t^D-e^{\varepsilon_t}k_t^{D'})_+\,d\mu_t\le\delta_t.
\]
Set $\widetilde k_t=\min\{k_t^D,e^{\varepsilon_t}k_t^{D'}\}$. Then $\int\widetilde k_t\,d\mu_t\ge1-\delta_t$. With
\[
\mu(dy_{1:T})=\prod_{t=1}^T\mu_t(dy_t\mid y_{<t})
\]
defined by iterated integration and $y_0=q$, backward integration gives
\[
\int\prod_{t=1}^T\widetilde k_t\,d\mu
\ge\prod_{t=1}^T(1-\delta_t)
\ge1-\sum_{t=1}^T\delta_t.
\]

\[
\prod_{t=1}^T\widetilde k_t
\le\prod_{t=1}^Tk_t^D,
\qquad
\prod_{t=1}^T\widetilde k_t
\le e^{\sum_t\varepsilon_t}\prod_{t=1}^Tk_t^{D'}.
\]
For every measurable $A$,
\[
\begin{aligned}
\mathbb P\{M(D)\in A\}
&\le e^{\sum_t\varepsilon_t}\mathbb P\{M(D')\in A\}
 +\sum_t\delta_t\\
&\le e^\varepsilon\mathbb P\{M(D')\in A\}+\delta.
\end{aligned}
\]
This holds for all ordered neighbors \citep{dwork2014algorithmic}.
\end{proof}

\subsection{Exact deficit for the released trajectory}
\label{app:exact-accountant}

\begin{proof}
For probability measures $P,Q$, define
\[
D_{e^\varepsilon}(P\|Q)
=\sup_{A\text{ measurable}}\{P(A)-e^\varepsilon Q(A)\}.
\]
The transcript densities relative to $\mu$ are $\prod_t k_t^D$ and $\prod_t k_t^{D'}$. Taking the positive set gives
\[
\begin{aligned}
\delta_{\mathrm{HS}}(\varepsilon)
&=\sup_{D\sim D'}\int
\left(\prod_{t=1}^T k_t^D
-e^\varepsilon\prod_{t=1}^T k_t^{D'}\right)_+\,d\mu.
\end{aligned}
\]
Thus $M$ is $(\varepsilon,\delta)$-DP exactly when $\delta_{\mathrm{HS}}(\varepsilon)\le\delta$. The measure $\mu$ includes projection boundary mass.
\end{proof}

\subsection{Gaussian calibration in Step~6}
\label{app:smooth-noise-calibration}

\begin{proof}
Let $\varepsilon>0$, $0<\delta<1$, and set
\[
\alpha=1+\sqrt{\frac{\log(1/\delta)}{\rho_{\varepsilon,\delta}}}>1.
\]
Choose
\begin{equation}
\begin{aligned}
0<\gamma_t
&\le\min\left\{\frac14,\frac1{8(\alpha-1)}\right\},\\
\gamma_t
&\le\frac{\alpha\log\alpha-(\alpha-1)\log(\alpha-1)}
{(\alpha-1)\{8\alpha(\alpha-1)\rho_{\varepsilon,\delta}+5Td\}}.
\end{aligned}
\label{eq:app-gaussian-smoothing}
\end{equation}

For densities $p,p'$ under a common measure, define
\[
D_\alpha(p\|p')
=\frac1{\alpha-1}\log\int p^\alpha(p')^{1-\alpha}.
\]
Take $D_\alpha=\infty$ if absolute continuity or integrability fails. A uniform conditional bound $\eta_t$ gives
\[
\mathbb E_{D'}\!\left[
\left(\frac{k_t^D(Y_t\mid Y_{<t})}
{k_t^{D'}(Y_t\mid Y_{<t})}\right)^\alpha
\Bigm|Y_{<t}\right]
\le e^{(\alpha-1)\eta_t}.
\]
Backward integration bounds the transcript divergence by $\sum_t\eta_t$ \citep{mironov2017renyi}.

For $a>0$,
\[
\sup_{u>a}\frac{u-a}{u^\alpha}
=\frac{(\alpha-1)^{\alpha-1}}{\alpha^\alpha a^{\alpha-1}},
\]
The derivative is $u^{-\alpha-1}[\alpha a-(\alpha-1)u]$. Set $a=e^\varepsilon$ and integrate:
\begin{equation}
D_{e^\varepsilon}(p\|p')
\le\frac{(\alpha-1)^{\alpha-1}}{\alpha^\alpha}
 e^{(\alpha-1)(D_\alpha(p\|p')-\varepsilon)}.
\label{eq:app-renyi-conversion}
\end{equation}
See \citet{balle2020hypothesis}.

For Gaussian densities with means $u,u'$ and variances $\sigma^2I_d,\sigma'^2I_d$, set $v=\sigma^2/\sigma'^2$. Completing the square gives, for $\alpha-(\alpha-1)v>0$,
\[
\begin{aligned}
\int p^\alpha(p')^{1-\alpha}
&=v^{-d(\alpha-1)/2}
 [\alpha-(\alpha-1)v]^{-d/2}\\
&\quad\times\exp\left\{
\frac{\alpha(\alpha-1)\|u-u'\|^2}
{2\sigma'^2[\alpha-(\alpha-1)v]}
\right\}.
\end{aligned}
\]
Under $y=u+\sigma x$, the quadratic coefficient is $-[\alpha-(\alpha-1)v]/2$; the integral is infinite otherwise. Taking logarithms gives
\begin{equation}
\begin{aligned}
&D_\alpha\!\left(
\mathcal N(u,\sigma^2I_d)\middle\|\mathcal N(u',\sigma'^2I_d)
\right)\\
&\quad=\frac{\alpha\|u-u'\|^2}
{2\sigma'^2[\alpha-(\alpha-1)v]}
-\frac d2\log v\\
&\qquad\quad-\frac{d}{2(\alpha-1)}
\log[\alpha-(\alpha-1)v].
\end{aligned}
\label{eq:app-unequal-gaussian-divergence}
\end{equation}

Fix neighbors and a history. Let $s,s'$ be their certificates and $u,u'$ their queries. By \eqref{eq:app-smooth-certificate-bound},
\[
\|u-u'\|\le\min\{s,s'\},
\qquad
e^{-\gamma_t}\le s/s'\le e^{\gamma_t}.
\]
Using $\sigma^2=Ts^2/(2\rho_{\varepsilon,\delta})$ and $\sigma'^2=Ts'^2/(2\rho_{\varepsilon,\delta})$, \eqref{eq:app-unequal-gaussian-divergence} is at most
\[
\begin{aligned}
&\frac{\alpha\rho_{\varepsilon,\delta}/T}
{1-(\alpha-1)(e^{2\gamma_t}-1)}+d\gamma_t\\
&\qquad-\frac{d}{2(\alpha-1)}
\log\{1-(\alpha-1)(e^{2\gamma_t}-1)\}.
\end{aligned}
\]
Here $\|u-u'\|\le s'$, $v\le e^{2\gamma_t}$, and $|\log v|\le2\gamma_t$.

For $0\le\gamma\le1/4$,
\[
e^{2\gamma}-1=\int_0^\gamma2e^{2x}\,dx\le4\gamma.
\]
By \eqref{eq:app-gaussian-smoothing},
\[
0\le(\alpha-1)(e^{2\gamma_t}-1)\le\frac12.
\]
Using $(1-x)^{-1}\le1+2x$ and $-\log(1-x)\le2x$ for $0\le x\le1/2$, the divergence is at most
\[
\frac{\alpha\rho_{\varepsilon,\delta}}{T}
+\left\{
\frac{8\alpha(\alpha-1)\rho_{\varepsilon,\delta}}{T}+5d
\right\}\gamma_t.
\]
The projected likelihood ratio is the conditional mean of the input ratio. Jensen's inequality for $x^\alpha$ gives the same divergence bound.

Composition and \eqref{eq:app-gaussian-smoothing} bound the transcript divergence by
\[
\alpha\rho_{\varepsilon,\delta}
+\frac{\alpha\log\alpha-(\alpha-1)\log(\alpha-1)}{\alpha-1}.
\]
By definition,
\[
\begin{aligned}
\varepsilon
&=\rho_{\varepsilon,\delta}
 +2\sqrt{\rho_{\varepsilon,\delta}\log(1/\delta)},\\
(\alpha-1)(\alpha\rho_{\varepsilon,\delta}-\varepsilon)
&=-\log(1/\delta).
\end{aligned}
\]
Equation~\eqref{eq:app-renyi-conversion} gives
\[
\delta_{\mathrm{HS}}(\varepsilon)
\le e^{-\log(1/\delta)}=\delta.
\]

For per-step budgets, use $T=1$ and $(\varepsilon_t,\delta_t)$ above. Variance $\overline\Delta_t^2/(2\rho_{\varepsilon_t,\delta_t})$ and the corresponding smoothing bound give conditional DP; apply Theorem~\ref{thm:brace-privacy}.
\end{proof}

\section{Proof of Theorem~\ref{thm:minimax-optimality}}
\label{app:minimax-proofs}

\begin{proof}
\textbf{Output-projection bound.}
By Appendix~\ref{app:direct-sensitivity}, $z_t\in\overline B_2(q,tr)$. Projection is nonexpansive, so
\[
\|Y_t-z_t\|
\le\|\widetilde Y_t-z_t\|.
\]
The triangle inequality gives
\[
\|Y_t-z_t\|
\le\|Y_t-q\|+\|z_t-q\|
\le(2t+1)r.
\]
Thus, for every bank and noise draw,
\begin{equation}
\begin{aligned}
\|Y_t-z_t\|
&\le\min\{\|\widetilde Y_t-z_t\|,(2t+1)r\},\\
\Rterm(M)
&\le\mathcal R_{\max,T}(M)\le(2T+1)^2r^2.
\end{aligned}
\label{eq:app-output-projection-bound}
\end{equation}

\par\medskip\noindent\textbf{Complete-neighbor lower bound.}
Let $K_x$, $x\in\overline B_2(q,r)$, be measurable kernels with
\[
K_x(A)\le e^\varepsilon K_{x'}(A)+\delta
\qquad\text{for every }x,x',A.
\]
Write
\[
\mathcal R=\sup_{x\in\overline B_2(q,r)}
\int\|u-x\|^2K_x(du).
\]
We may assume $\mathcal R<\infty$. If $\mathcal R=0$, all outputs equal $x$, contradicting $\delta<1$. Set
\[
h=\sqrt{\frac{2\mathcal R}{1-\delta}}.
\]
If $h\ge r/4$, then $\mathcal R\ge(1-\delta)r^2/32$.

For $h<r/4$, translate $q$ to zero. Fix a unit vector $e_1$ and $x_0=-re_1/2$. For $x\in\overline B_2(re_1/2,r/4)$, Markov and DP give
\[
\begin{aligned}
K_x(\overline B_2(x,h))
&\ge1-\frac{\mathcal R}{h^2}
 =1-\frac{1-\delta}{2},\\
K_{x_0}(\overline B_2(x,h))
&\ge e^{-\varepsilon}
 \bigl(K_x(\overline B_2(x,h))-\delta\bigr)\\
&\ge\frac{1-\delta}{2}e^{-\varepsilon}.
\end{aligned}
\]
Here $\|u-x\|\le h$ implies $\|u-x_0\|>r/2$. Let $V_d$ be the unit-ball volume. By Tonelli and Markov,
\[
\begin{aligned}
V_d(r/4)^d\frac{1-\delta}{2}e^{-\varepsilon}
&\le\int_{\overline B_2(re_1/2,r/4)}
 K_{x_0}(\overline B_2(x,h))\,dx\\
&\le V_dh^d K_{x_0}\{u:\|u-x_0\|\ge r/2\}\\
&\le V_dh^d\frac{4\mathcal R}{r^2}.
\end{aligned}
\]
Cancel $V_d$ and substitute $h$:
\[
\mathcal R^{(d+2)/2}
\ge2^{-3-5d/2}r^{d+2}
 (1-\delta)^{(d+2)/2}e^{-\varepsilon}.
\]
Using $(5d+6)/(d+2)\le5$, both cases give
\begin{equation}
\sup_x\int\|u-x\|^2K_x(du)
\ge\frac{1-\delta}{32}r^2e^{-2\varepsilon/(d+2)}.
\label{eq:app-location-lower}
\end{equation}

\par\medskip\noindent\textbf{IID rare-record lower bound.}
Fix a unit vector $e_1$ and $b=q+4re_1$. Let $D_{i,x}$ contain $x\in\overline B_2(q,r)$ at position $i$ and $b$ elsewhere. Since $\|b-x\|\ge3r$,
\[
z_t(D_{i,x})=x\qquad(t\ge1).
\]
Let $K_{i,x}$ be the terminal-output law of a private mechanism on $D_{i,x}$, and $K_x=n^{-1}\sum_iK_{i,x}$. Since $D_{i,x}\sim D_{i,x'}$, terminal post-processing and averaging give the DP condition in \eqref{eq:app-location-lower}.

Write $\delta_x$ for the point mass at $x$. For $n\ge2$, take
\[
P_x=(1-1/n)\delta_b+(1/n)\delta_x.
\]
Exactly one $x$ has probability $(1-1/n)^{n-1}$ and uniform position. Thus
\[
\mathbb E_{D\sim P_x^n,M}\|Y_T-z_T(D)\|^2
\ge(1-1/n)^{n-1}\int\|u-x\|^2K_x(du).
\]
For $0<u<1$,
\[
\log(1-u)=-\int_0^u\frac{ds}{1-s}\ge-\frac{u}{1-u},
\]
Thus $(1-1/n)^{n-1}\ge e^{-1}$. Apply \eqref{eq:app-location-lower}, take $\sup_x$, then $\inf_M$:
\begin{equation}
\Rstar(\varepsilon,\delta)
\ge\frac{1-\delta}{32e}r^2e^{-2\varepsilon/(d+2)}
\ge\frac{1-\delta}{32e}r^2e^{-2\varepsilon/3}.
\label{eq:app-rare-lower}
\end{equation}
Here $d\ge1$ and $\varepsilon\ge0$. For $n=1$, use $P_x=\delta_x$. Since terminal loss is at most maximum-path loss, the latter has the same lower bound.

\par\medskip\noindent\textbf{Long-chain hard family.}
Suppose $T\ge2$ and $n\ge2^{T+2}-3$. Set
\[
\begin{aligned}
x_j&=q+\frac{(j+1)r}{2}e_1,
 &&1\le j\le T+1,\\
x_*&=x_1=q+re_1,\\
P_0\{x_j\}&=\frac{2^j}{n-1},
 &&2\le j\le T+1.
\end{aligned}
\]
Put the remaining mass at $q+(T+3)re_1$; this is valid since
\[
\sum_{j=2}^{T+1}2^j=2^{T+2}-4\le n-1.
\]
All background records lie outside the initial gate, so $z_t=q$ without a trigger. The movement bound excludes the remote point through time $T$.

Draw $n$ IID records from
\[
P_1=(1-1/n)P_0+(1/n)\delta_{x_*},
\]
and condition on one trigger. Let $N_j$ count background records at $x_j$, and set $N_1=1$. Under this conditional law, define
\[
E=\{N_3>1\}
 \cap\bigcap_{s=3}^T
 \{N_{s+1}>N_{s-1}+2N_{s-2}\},
\]
For $T=2$, omit the indexed intersection. Only the trigger is initially active, so $z_1=x_1$ and
\[
z_2-x_2
=\frac r2\frac{N_3-1}{1+N_2+N_3}e_1.
\]
On $E$, $z_2$ is strictly between $x_2,x_3$. If $z_{s-1}$ is strictly between $x_{s-1},x_s$, its gate contains only $x_{s-2},x_{s-1},x_s,x_{s+1}$, giving
\[
z_s-x_s
=\frac r2
\frac{N_{s+1}-N_{s-1}-2N_{s-2}}
{N_{s-2}+N_{s-1}+N_s+N_{s+1}}e_1.
\]
On $E$, the numerator is positive and $N_s>0$ by the preceding count inequality. Thus $z_s$ is strictly between $x_s,x_{s+1}$. Induction gives
\[
\|z_T-q\|>\frac{(T+1)r}{2}\ge\frac{Tr}{2}
\qquad\text{on }E.
\]

By the multinomial law,
\[
\mathbb E\prod_{j=2}^{T+1}u_j^{N_j}
=\left\{1+\frac1{n-1}
\sum_{j=2}^{T+1}2^j(u_j-1)\right\}^{n-1}.
\]
By Markov,
\[
\begin{aligned}
\mathbb P(N_3\le1)
&\le2\mathbb E2^{-N_3}
 =2\left(1-\frac4{n-1}\right)^{n-1}
 \le2e^{-4},\\
\mathbb P(N_4\le N_2+2)
&\le4\mathbb E2^{N_2-N_4}
 =4\left(1-\frac4{n-1}\right)^{n-1}
 \le4e^{-4},
\end{aligned}
\]
The second line requires $T\ge3$. For $s\ge4$, apply Markov to $(3/4)^{N_{s+1}-N_{s-1}-2N_{s-2}}$:
\[
\begin{aligned}
\mathbb P(N_{s+1}-N_{s-1}-2N_{s-2}\le0)
&\le\left(1-\frac{5\,2^{s-2}}{9(n-1)}\right)^{n-1}\\
&\le e^{-5\,2^{s-2}/9},
\end{aligned}
\]
because
\[
2^{s+1}(-1/4)+2^{s-1}(1/3)+2^{s-2}(7/9)
=-5\,2^{s-2}/9.
\]
The union bound, omitting absent terms for $T=2,3$, gives
\[
\begin{aligned}
\mathbb P(E^c)
&\le6e^{-4}+\sum_{k=0}^{\infty}e^{-(20/9)2^k}\\
&\le6e^{-4}+\sum_{k=0}^{\infty}e^{-(20/9)(k+1)}\\
&=6e^{-4}+\frac1{e^{20/9}-1}<\frac14.
\end{aligned}
\]
Thus $\mathbb P(\|z_T-q\|\ge Tr/2)\ge3/4$ under the conditional law.

\par\medskip\noindent\textbf{Private-mechanism lower bound.}
Take $M$ with $\Rterm(M)<\infty$. Draw $n$ IID background records to form $D_0$; replace a uniformly chosen row by $x_*$ to form $D_1$. Then $D_0\sim P_0^n$, $D_0\sim D_1$, and $D_1$ has the law of $P_1^n$ conditional on one trigger. That event has probability
\[
(1-1/n)^{n-1}\ge e^{-1}.
\]
Restrict this coupling to $\|z_T(D_1)-q\|\ge Tr/2$. Append the terminal-output kernel on $D_i$ to obtain measures $\mu_i$ on
\[
\Omega=(\mathbb R^d)^n\times(\mathbb R^d)^n\times\mathbb R^d,
\]
with common mass at least $3/4$. Integrating DP over sections gives
\[
\mu_1(A)-e^\varepsilon\mu_0(A)
\le\delta\mu_1(\Omega)
\qquad(A\subseteq\Omega\text{ measurable}).
\]
With $k_i=d\mu_i/d(\mu_0+\mu_1)$, define
\[
d\nu=\min\{k_1,e^\varepsilon k_0\}\,d(\mu_0+\mu_1).
\]
Taking positive parts gives
\[
\begin{aligned}
\nu(\Omega)
&=\mu_1(\Omega)
 -\int(k_1-e^\varepsilon k_0)_+\,d(\mu_0+\mu_1)\\
&\ge(1-\delta)\mu_1(\Omega)
 \ge\frac34(1-\delta).
\end{aligned}
\]
Also $\nu\le\mu_1$, $\nu\le e^\varepsilon\mu_0$, and on its support
\[
\frac{Tr}{2}\le\|y-q\|+\|y-z_T(D_1)\|.
\]
Since $z_T(D_0)=q$, the two risks are bounded by $\Rterm(M)$ and $e\Rterm(M)$. By Minkowski,
\[
\begin{aligned}
\frac{Tr}{2}\sqrt{\frac34(1-\delta)}
&\le\left(\int\|y-q\|^2\,d\nu\right)^{1/2}
 +\left(\int\|y-z_T(D_1)\|^2\,d\nu\right)^{1/2}\\
&\le\bigl(e^{\varepsilon/2}+\sqrt e\bigr)\sqrt{\Rterm(M)}.
\end{aligned}
\]
Square and take $\inf_M$:
\begin{equation}
\Rstar(\varepsilon,\delta)
\ge\frac{3(1-\delta)}{16(e^{\varepsilon/2}+\sqrt e)^2}T^2r^2.
\label{eq:app-long-lower}
\end{equation}
The bound also holds for infinite risk and for maximum-path risk.

\par\medskip\noindent\textbf{Conclusion.}
By DP, $M$ belongs to both minimax classes. Equations~\eqref{eq:app-output-projection-bound} and \eqref{eq:app-rare-lower} give
\[
\begin{aligned}
\frac{(1-\delta)e^{-2\varepsilon/3}}{32e}r^2
&\le\Rstar(\varepsilon,\delta)
 \le\Rterm(M)\le(2T+1)^2r^2,\\
\frac{(1-\delta)e^{-2\varepsilon/3}}{32e}r^2
&\le\mathcal R_{\max,T}^{\star}(\varepsilon,\delta)
 \le\mathcal R_{\max,T}(M)\le(2T+1)^2r^2.
\end{aligned}
\]
The constants are independent of $n,d,r$. In \eqref{eq:long-chain-region}, use \eqref{eq:app-long-lower} and $(2T+1)^2\le9T^2$:
\[
\begin{aligned}
\frac{3(1-\delta)T^2r^2}{16(e^{\varepsilon/2}+\sqrt e)^2}
&\le\Rstar(\varepsilon,\delta)
 \le\Rterm(M)\le9T^2r^2,\\
\frac{3(1-\delta)T^2r^2}{16(e^{\varepsilon/2}+\sqrt e)^2}
&\le\mathcal R_{\max,T}^{\star}(\varepsilon,\delta)
 \le\mathcal R_{\max,T}(M)\le9T^2r^2.
\end{aligned}
\]
These constants depend only on $\varepsilon,\delta$.
\end{proof}

\section{Proof of Theorem~\ref{thm:general-brace-first-order-limit}}
\label{app:clt-proofs}

\begin{proof}
By $\widetilde Y_{n,t}=Q_{n,t}+\xi_{n,t}$,
\[
\begin{aligned}
\sqrt n\,(\widetilde Y_{n,t}-z_{n,t})
&=-\sqrt n\,(z_{n,t}-a_{n,t}-H_{n,t}R_{n,t})\\
&\quad+\sqrt n\,\{Q_{n,t}-(a_{n,t}+H_{n,t}R_{n,t})\}
 +\sqrt n\,\xi_{n,t}.
\end{aligned}
\]
By Assumption~\ref{ass:brace-clt-policy}, fixed $T$, and Slutsky,
\[
\sqrt n\,(\widetilde Y_{n,1:T}-z_{n,1:T})
\Longrightarrow-C_{1:T}+\xi_{1:T}.
\]
Thus $\max_{t\le T}\|\widetilde Y_{n,t}-z_{n,t}\|=O_P(n^{-1/2})$.

If this maximum is less than $r$,
\[
\|\widetilde Y_{n,t}-q\|
\le\|\widetilde Y_{n,t}-z_{n,t}\|+\|z_{n,t}-q\|
<(t+1)r.
\]
Thus
\[
\begin{aligned}
\mathbb P\{Y_{n,1:T}\ne\widetilde Y_{n,1:T}\}
&\le\mathbb P\!\left\{
\max_{t\le T}\|\widetilde Y_{n,t}-z_{n,t}\|\ge r
\right\}\\
&\longrightarrow0.
\end{aligned}
\]
For every $\eta>0$,
\[
\begin{aligned}
\mathbb P\!\left(\sqrt n\,\|Y_{n,1:T}-\widetilde Y_{n,1:T}\|>\eta\right)
&\le\mathbb P(Y_{n,1:T}\ne\widetilde Y_{n,1:T})\longrightarrow0.
\end{aligned}
\]
Slutsky gives the same limit for $Y_{n,1:T}$.

Under joint Gaussianity, $-C_{1:T}+\xi_{1:T}$ is Gaussian with mean and covariance
\[
-\mathbb EC_{1:T},
\qquad
\operatorname{Cov}(-C_{1:T}+\xi_{1:T}),
\]
since $\mathbb E\xi_{1:T}=0$.
\end{proof}

\section{Proof of Theorem~\ref{thm:effective-count-effect}}
\label{app:effective-count-proof}

\begin{proof}
Write $Y_{n,<t}=(q,Y_{n,1},\ldots,Y_{n,t-1})$. Step~6 and $\overline\Delta_{n,t}=2r/\widehat m_{n,t}$ give
\[
\sqrt n\,\xi_{n,t}\mid(D_n,Y_{n,<t})
\sim\mathcal N\!\left(
0,\frac{2Tr^2}{\rho_{\varepsilon,\delta}}
\frac{n}{\widehat m_{n,t}^{\,2}}I_d
\right).
\]
For $u\in\mathbb R^d$,
\[
\mathbb E\!\left[e^{i\sqrt n\,u^\top\xi_{n,t}}
\mid D_n,Y_{n,<t}\right]
=\exp\!\left\{
-\frac{Tr^2}{\rho_{\varepsilon,\delta}}
\frac{n}{\widehat m_{n,t}^{\,2}}\|u\|^2
\right\}.
\]
If $\sqrt n/\widehat m_{n,t}\xrightarrow{P}\ell_t\in[0,\infty)$, this converges in probability to
\[
\exp\!\left\{
-\frac{Tr^2\ell_t^2}{\rho_{\varepsilon,\delta}}\|u\|^2
\right\}.
\]
The right side is bounded by one. Taking expectations and applying L\'evy's theorem gives
\[
\sqrt n\,\xi_{n,t}
\Longrightarrow\mathcal N\!\left(
0,\frac{2Tr^2\ell_t^2}{\rho_{\varepsilon,\delta}}I_d
\right).
\]
For $\ell_t=0$, $\sqrt n\,\xi_{n,t}\xrightarrow{P}0$.

Let $F_{\chi_d^2}$ be the $\chi_d^2$ distribution function. For $K>0$,
\[
\mathbb P\!\left(
\|\sqrt n\,\xi_{n,t}\|\le K\mid D_n,Y_{n,<t}
\right)
=F_{\chi_d^2}\!\left(
\frac{\rho_{\varepsilon,\delta}K^2}{2Tr^2}
\frac{\widehat m_{n,t}^{\,2}}{n}
\right).
\]
If $\widehat m_{n,t}/\sqrt n\xrightarrow{P}0$, continuity at zero and $0\le F_{\chi_d^2}\le1$ give
\[
\mathbb P\{\|\sqrt n\,\xi_{n,t}\|\le K\}\longrightarrow0.
\]
Thus $\|\sqrt n\,\xi_{n,t}\|\xrightarrow{P}\infty$, so the sequence is not tight.

If $\widehat m_{n,t}/n\xrightarrow{P}c_t>0$,
\[
\frac{n}{\widehat m_{n,t}}\xrightarrow{P}\frac1{c_t},
\qquad
\frac{\sqrt n}{\widehat m_{n,t}}\xrightarrow{P}0.
\]
Thus $\sqrt n\,\xi_{n,t}\xrightarrow{P}0$. The same argument with $n^2/\widehat m_{n,t}^{\,2}\xrightarrow{P}c_t^{-2}$ gives
\[
n\xi_{n,t}
\Longrightarrow\mathcal N\!\left(
0,\frac{2Tr^2}{\rho_{\varepsilon,\delta}c_t^2}I_d
\right).
\]
\end{proof}

\section{Proof of Corollary~\ref{cor:selected-brace-trajectory-limit}}
\label{app:selected-brace-proof}

\begin{proof}
For measurable $f$, write
\[
Pf=\mathbb Ef(X),\qquad
P_nf=\frac1n\sum_{i=1}^nf(X_i),\qquad
\mathbb G_nf=\sqrt n(P_n-P)f.
\]
Define
\[
g_y(x)=\mathbf1\{\|x-y\|\le r\},
\qquad
p(y)=Pg_y,
\qquad
\Phi_P(y)=\frac{P(Xg_y)}{p(y)}
\]
for $p(y)>0$. Then $p(\bar z_{t-1})=p_t$, $\Phi_P(\bar z_{t-1})=\bar z_t$, and $\|\bar z_t-q\|\le tr$ by induction.

\par\medskip\noindent
The continuous density is bounded near each compact boundary sphere. For small fixed $a>0$, set
\[
U_t=\overline B_2(\bar z_{t-1},a),
\qquad
U=\bigcup_{t=1}^TU_t.
\]
For $y,u\in U_t$, the gate difference lies in an annular shell of thickness $\|y-u\|$. Its volume and the density bound give
\[
P\!\left(\overline B_2(y,r)\mathbin{\triangle}
          \overline B_2(u,r)\right)
\le C\|y-u\|.
\]
All positive constants below are independent of $n$. For small enough $a$, $p_t>0$ gives
\begin{equation}
\begin{aligned}
\inf_{y\in U}p(y)&\ge2c>0,\\
|p(y)-p(u)|+\|\Phi_P(y)-\Phi_P(u)\|
&\le C\|y-u\|,\qquad y,u\in U_t.
\end{aligned}
\label{eq:app-local-regularity}
\end{equation}
For $\Phi_P$, use the gate-difference bound, positive denominators, and $\|X\|\le\|q\|+Tr+a$ inside the gates.

The classes $\{g_y:y\in U\}$ and $\{x_jg_y:y\in U\}$, $1\le j\le d$, are bounded VC-type classes. By \citet[Chapter~19]{vaart1998asymptotic},
\begin{equation}
\sup_{y\in U}
\bigl\{|\mathbb G_ng_y|+\|\mathbb G_n(Xg_y)\|\bigr\}
=O_P(1).
\label{eq:app-uniform-empirical-bound}
\end{equation}
Also,
\[
\begin{aligned}
P[(g_y-g_u)^2]&\le C\|y-u\|,\\
P[X_j^2(g_y-g_u)^2]&\le C\|y-u\|,
\qquad y,u\in U_t.
\end{aligned}
\]
By \eqref{eq:app-uniform-empirical-bound}, $\inf_{y\in U}P_ng_y\ge c$ with probability tending to one. On this event,
\[
\sqrt n\{\Phi_{D_n}(y)-\Phi_P(y)\}
=\frac{\mathbb G_n(Xg_y)-\Phi_P(y)\mathbb G_ng_y}{P_ng_y}.
\]
For $y_n,u_n\xrightarrow{P}\bar z_{t-1}$, stochastic equicontinuity and \eqref{eq:app-local-regularity} give
\begin{equation}
\begin{aligned}
\sqrt n\bigl[\Phi_{D_n}(y_n)-\Phi_P(y_n)
-\Phi_{D_n}(u_n)+\Phi_P(u_n)\bigr]
\xrightarrow{P}0.
\end{aligned}
\label{eq:app-empirical-equicontinuity}
\end{equation}

\par\medskip\noindent
For $n\ge2$, set
\[
k_n=\left\lceil\frac{2\log n}{\min_{1\le s\le T}\gamma_s}\right\rceil.
\]
Then $k_n=O(\log n)$ and eventually $k_n+1\le n$. Write $P_Bf=n^{-1}\sum_i f(B_i)$. For $d_H(D_n,B)\le k_n+1$, bounded gated functions give
\[
\begin{aligned}
\sup_{y\in U}|P_Bg_y-P_ng_y|
&\le\frac{k_n+1}{n},\\
\sup_{y\in U}\|P_B(Xg_y)-P_n(Xg_y)\|
&\le C\frac{k_n+1}{n}.
\end{aligned}
\]
By \eqref{eq:app-uniform-empirical-bound} and the positive denominators,
\begin{equation}
\sup_{d_H(D_n,B)\le k_n+1}\sup_{y\in U}
\|\Phi_B(y)-\Phi_P(y)\|
=O_P(n^{-1/2}+k_n/n)=O_P(n^{-1/2}).
\label{eq:app-robust-map-bound}
\end{equation}
Start from $z_0(B)=\bar z_0=q$. For $z_{s-1}(B)\in U_s$,
\[
\begin{aligned}
\|z_s(B)-\bar z_s\|
&\le\sup_{y\in U_s}\|\Phi_B(y)-\Phi_P(y)\|\\
&\quad+C\|z_{s-1}(B)-\bar z_{s-1}\|.
\end{aligned}
\]
Induction over fixed $T$, using \eqref{eq:app-robust-map-bound}, keeps the centers in $U_s$ with probability tending to one and gives
\begin{equation}
\sup_{d_H(D_n,B)\le k_n+1}\max_{0\le s\le T}
\|z_s(B)-\bar z_s\|=O_P(n^{-1/2}).
\label{eq:app-robust-path-bound}
\end{equation}

\par\medskip\noindent
Induct on $t$, starting from $Y_{n,0}=\bar z_0=q$. Suppose $Y_{n,s}-\bar z_s=O_P(n^{-1/2})$ for $s<t$. Using $R_s(B)=z_s(B)-\Phi_B(Y_{n,s-1})$ and \eqref{eq:app-local-regularity}--\eqref{eq:app-robust-path-bound},
\[
\sup_{d_H(D_n,B)\le k_n+1}\max_{1\le s\le t}
\|R_s(B;Y_{n,<s})\|=O_P(n^{-1/2}).
\]
Since $\kappa Tr^2>0$ is fixed, with probability tending to one,
\[
H_s(B;Y_{n,<s})=0
\qquad(d_H(D_n,B)\le k_n+1,\ 1\le s\le t).
\]
The uniform count bound also gives, with probability tending to one,
\[
|I(B,Y_{n,s-1})|\ge cn
\qquad(d_H(D_n,B)\le k_n+1,
\ 1\le s\le t).
\]
Every neighbor of a bank within distance $k_n$ lies within distance $k_n+1$. Step~5 then gives
\[
\sup_{d_H(D_n,B)\le k_n}
L_t(B;Y_{n,<t})\le\frac{2r}{cn}.
\]
For more distant banks, $L_t\le2r$. Splitting the envelope gives, with probability tending to one,
\[
\begin{aligned}
\overline\Delta_{n,t}
&\le\max\left\{
\frac{2r}{n},\frac{2r}{cn},2re^{-\gamma_t k_n}
\right\}\\
&\le\frac{C}{n},
\end{aligned}
\]
since $e^{-\gamma_t k_n}\le n^{-2}$. Thus $\overline\Delta_{n,t}=O_P(n^{-1})$.

By Step~6,
\[
\mathbb E[\|\xi_{n,t}\|^2\mid D_n,Y_{n,<t}]
=\frac{dT\overline\Delta_{n,t}^{\,2}}{2\rho_{\varepsilon,\delta}}.
\]
For $K,b>0$, conditional Markov gives
\[
\mathbb P(n\|\xi_{n,t}\|>b)
\le\mathbb P(n\overline\Delta_{n,t}>K)
+\frac{dTK^2}{2\rho_{\varepsilon,\delta}b^2}.
\]
Thus $\xi_{n,t}=O_P(n^{-1})$. With probability tending to one, $H_{n,t}=0$ and $Q_{n,t}=\Phi_{D_n}(Y_{n,t-1})$. By \eqref{eq:app-local-regularity} and \eqref{eq:app-robust-map-bound},
\[
\begin{aligned}
\|Q_{n,t}-\bar z_t\|
&\le\|\Phi_{D_n}(Y_{n,t-1})-\Phi_P(Y_{n,t-1})\|\\
&\quad+C\|Y_{n,t-1}-\bar z_{t-1}\|
=O_P(n^{-1/2}).
\end{aligned}
\]
Since $\bar z_t$ is in the output ball, nonexpansiveness gives
\[
\|Y_{n,t}-\bar z_t\|
\le\|Q_{n,t}+\xi_{n,t}-\bar z_t\|
=O_P(n^{-1/2}).
\]
Induction and fixed $T$ give
\begin{equation}
\begin{aligned}
\mathbb P\{H_{n,t}=0\text{ for all }t\le T\}&\longrightarrow1,\\
\max_{t\le T}\overline\Delta_{n,t}&=O_P(n^{-1}),\\
\max_{t\le T}\|\xi_{n,t}\|&=O_P(n^{-1}).
\end{aligned}
\label{eq:app-selected-noise-scale}
\end{equation}

\par\medskip\noindent
Induct from $Y_{n,0}=z_{n,0}=q$. Both preceding centers converge to $\bar z_{t-1}$. Equation~\eqref{eq:app-empirical-equicontinuity} gives
\[
\begin{aligned}
\sqrt n\,(a_{n,t}-z_{n,t})
&=\sqrt n\bigl\{
\Phi_P(Y_{n,t-1})-\Phi_P(z_{n,t-1})
\bigr\}+o_P(1).
\end{aligned}
\]
By local Lipschitz continuity and induction, this is $o_P(1)$, so $\sqrt n\,R_{n,t}\xrightarrow{P}0$. Equation~\eqref{eq:app-selected-noise-scale} gives $Q_{n,t}=a_{n,t}$ with probability tending to one and $\sqrt n\,\xi_{n,t}\xrightarrow{P}0$. By nonexpansiveness,
\[
\sqrt n\,\|Y_{n,t}-z_{n,t}\|
\le\sqrt n\,\|Q_{n,t}-z_{n,t}\|
+\sqrt n\,\|\xi_{n,t}\|
\xrightarrow{P}0.
\]

Also,
\[
\sqrt n\,(z_{n,t}-a_{n,t}-H_{n,t}R_{n,t})
=\sqrt n\,(1-H_{n,t})R_{n,t}\xrightarrow{P}0.
\]
By \eqref{eq:app-selected-noise-scale},
\[
\begin{aligned}
&\mathbb P\!\left(\sqrt n\max_{t\le T}
\|Q_{n,t}-(a_{n,t}+H_{n,t}R_{n,t})\|>\eta\right)\\
&\qquad\le\mathbb P\{H_{n,t}=1\text{ for some }t\le T\}\longrightarrow0,
\qquad\eta>0.
\end{aligned}
\]
Together with $\sqrt n\max_{t\le T}\|\xi_{n,t}\|\xrightarrow{P}0$, this proves Assumption~\ref{ass:brace-clt-policy} with $C_t=\xi_t=0$. Joint convergence follows from fixed $T$.
\end{proof}

\subsection{MNIST reproducibility details}

\paragraph{Code and data.}
The supplementary archive \texttt{brace\_mnist\_reproducibility.zip} contains the code, package versions, reproduction commands, VAE checkpoint, PCA features, data splits, corrupted queries, Gaussian draws, and reported results. The training script downloads MNIST and checks the files.

\paragraph{Training and settings.}
Pixel values are divided by 255. With seed 20271006, we use 50,000 training images, a disjoint 1,000-image memory bank, 128 development images, and 256 test queries. The VAE uses a 784--400 encoder, ten-dimensional mean and log-variance heads, and a 10--400--784 decoder. Training uses Adam with learning rate $10^{-3}$, batch size 256, and 15 epochs. The loss is binary cross-entropy plus KL divergence, with KL weight $\min(1,e/3)$ at epoch $e$. PCA is fitted to the training posterior means without whitening.

Retrieval uses seed 202610063, $q_i=x_i+3u_i$ with $u_i\sim\mathcal N(0,I_2)$, and 32 noise repetitions. The same starts and standard Gaussian draws are used across methods. We use
\[
\beta\in\{10^{j/5}:j=-30,\ldots,15\}\cup\{12.7427,20.6914\},
\qquad
T\in\{3,5,7\},\quad
\varepsilon\in\{16,32,64\},
\]
with $\delta=10^{-5}$, $r=\sqrt{2/\beta}$, $\kappa=128$, and $\gamma=0.25$.

\paragraph{Metrics and confidence intervals.}
For $Q=256$, $R=32$, and $d=2$,
\[
\widehat E(a)=\frac{1}{QRd}
\sum_{i=1}^{Q}\sum_{j=1}^{R}
\|Y_T^{(i,j)}-a_i\|^2.
\]
Prediction MSE uses $a_i=x_i$, while fidelity MSE uses $a_i=z_{i,T}$. Pointwise 95\% intervals use 512 query-cluster bootstrap resamples. Repetitions are averaged within each query, queries are resampled with replacement, and the 2.5th and 97.5th percentiles are reported.

\end{document}